\newenvironment{tightcenter}{%
  \setlength\topsep{0pt}
  \setlength\parskip{0pt}
  \begin{center}}
{\end{center}}
\def\EasyCrypt{\textsc{EasyCrypt}\xspace}
\def\prhl{\textsc{pRHL}\xspace}
\def\phl{\textsc{pHL}\xspace}
\newcommand{\ensuretext}[1]{\ensuremath{\text{#1}}}
\newcommand{\drarrow}{\raisebox{.07em}{$\mathbin{\scriptstyle\mathsf{-\mkern-3mu>}}$}}
\newcommand{\dlarrow}{\raisebox{.07em}{$\mathbin{\scriptstyle\mathsf{<\mkern-3mu-}}$}}
\newcommand{\drrarrow}{\raisebox{.07em}{$\mathbin{\scriptstyle\mathsf{-\mkern-3mu>\mkern-3mu>}}$}}
\newcommand{\dllarrow}{\raisebox{.07em}{$\mathbin{\scriptstyle\mathsf{<\mkern-3mu<\mkern-3mu-}}$}}
\lstdefinestyle{easycrypt-default}{
  upquote=true,
  escapechar=\#,
  mathescape=false,
  columns=fullflexible,
  keepspaces=true,
  captionpos=b,
  rangebeginprefix={(**\ begin\ },
  rangeendprefix={(**\ end\ },
  rangesuffix={\ *)},
  includerangemarker=false,
  basicstyle=\small\ttfamily,
  identifierstyle={},
  keywordstyle=[1]{\bfseries\color{violet}},
  keywordstyle=[2]{\bfseries\color{olive}},
  keywordstyle=[3]{\bfseries\color{blue}},
  keywordstyle=[4]{\bfseries\color{blue}},
  keywordstyle=[5]{\bfseries\color{red}},
  keywordstyle=[6]{\bfseries\color{violet}},
}
\lstdefinestyle{easycrypt-math}{
  style=easycrypt-default,
  mathescape=true,
}
\lstdefinestyle{easycrypt-math-nocolors}{
  style=easycrypt-default,
  mathescape=true,
  keywordstyle=[1]{\bfseries},
  keywordstyle=[2]{\bfseries},
  keywordstyle=[3]{\bfseries},
  keywordstyle=[4]{\bfseries},
  keywordstyle=[5]{\bfseries},
  keywordstyle=[6]{\bfseries},
}
\lstdefinelanguage{easycrypt-kw}{
  morekeywords=[1]{forall,exists,fun,glob,let,in,var,proc,if,then,else,elif,while,assert,return,res},
  morekeywords=[2]{axiom,hypothesis,axiomatized,lemma,realize,proof,qed,goal,end,import,export,local,declare,hint,nosmt,module,of,const,op,pred,require,theory,section,type,class,instance,print,search,why3,as,Pr,clone,with,prover,timeout,nolocals,Top,equiv,hoare,phoare,islossless},
  morekeywords=[3]{beta,iota,zeta,logic,delta,simplify,congr,change,split,left,right,generalize,case,intros,pose,have,cut,elim,clear,apply,rewrite,rwnormal,subst,progress,trivial,auto,idtac,move,modpath,fieldeq,ringeq,algebra,transitivity,symmetry,seq,wp,sp,sim,skip,call,rcondt,rcondf,swap,cfold,rnd,pr_bounded,bypr,byphoare,byequiv,fel,conseq,exfalso,inline,alias,fission,fusion,unroll,splitwhile,kill,eager},
  morekeywords=[4]{try,first,last,do,strict,expect},
  morekeywords=[5]{exact,assumption,by,reflexivity,done},
  morekeywords=[6]{admit,smt},
}
\def\Why3{\ensuretext{\ls[language=easycrypt]"why3"}}
\def\If{\ensuretext{\ls[language=easycrypt]"if"}}
\def\Return{\ensuretext{\ls[language=easycrypt]"return"}}
\lstdefinelanguage{easycrypt-base}{
  language=easycrypt-kw,
  morekeywords=[1]{arg},
  morecomment=[n][\itshape]{(*}{*)},
  morecomment=[n][\bfseries]{(**}{*)},
  literate=
    {tau}{$\tau$}1
    {sigma}{$\sigma$}1
    {_1}{${}_{\!1}$}1
    {_2}{${}_{\!2}$}1
    {_3}{${}_{\!3}$}1
    {_n}{${}_{\!n}$}1
    {_p}{${}_{\!p}$}1
    {,,,}{$\mathrel{\ldots}$}1
    {...}{$\mathrel{\cdots}$}1
    {->}{$\drarrow$}1
    {<-}{$\dlarrow$}1
    {->>}{$\drrarrow$}2
    {<<-}{$\dllarrow$}2
}
\lstdefinelanguage{easycrypt}{
  language=easycrypt-base,
  style=easycrypt-default,
}
\lstdefinelanguage{easycrypt-math}{
  language=easycrypt-base,
  style=easycrypt-math,
}
\lstdefinelanguage{easycrypt-math-nocolors}{
  language=easycrypt-base,
  style=easycrypt-math-nocolors,
}
\newcommand{\ecinput}[5][]%
{\lstinputlisting[language=easycrypt,linerange={#4},caption={#3},label={#5},#1]{#2}}
\newcommand{\ecinputfloat}[4]%
{\lstinputlisting[language=easycrypt,float=tp,linerange={#3},caption={#2},label={#4}]{#1}}
\def\ls{\lstinline}
\newcommand{\mycapfnt}[1]{\textsf{\small #1}}\SetAlCapNameFnt{\mycapfnt}
\newtheorem{theorem}{Theorem}
\newtheorem{definition}{Definition}
\newtheorem{notation}{Notation}
\newtheorem{lemma}{Lemma}
\numberwithin{theorem}{subsection}
\numberwithin{definition}{subsection}
\numberwithin{notation}{subsection}
\numberwithin{lemma}{subsection}
\numberwithin{proposition}{subsection}
\numberwithin{corollary}{subsection}
\newcommand{\inlinepub}[4][]{%
\vskip 1em
{\color{white!90!black}\ifthenelse{\equal{#1}{}}{}{\vline width 3pt}}%
\begin{minipage}[c]{0.02\textwidth}\strut\ \end{minipage}%
\begin{minipage}[c]{.90\textwidth}%
\noindent%
{\bf #2} \\ #3 \\ {\em #4}%
\end{minipage}%
}
\let\oldnl\nl%
\newcommand{\nonl}{\renewcommand{\nl}{\let\nl\oldnl}}%
\newcommand{\oset}[3][0ex]{%
  \mathrel{\mathop{#3}\limits^{
    \vbox to#1{\kern-2\ex@
    \hbox{$\scriptstyle#2$}\vss}}}}
\newcommand{\enc}[1][]{\ifthenelse{\equal{#1}{}}{\mathbf{\mathcal{E}}}{\mathbf{\mathcal{E}}_{#1}}}
\newcommand{\dec}[1][]{\ifthenelse{\equal{#1}{}}{\mathbf{\mathcal{D}}}{\mathbf{\mathcal{D}}_{#1}}}
\newlength{\protocolArrowLength}
\newlength{\partysize}
\newcolumntype{B}[1]{>{\centering\arraybackslash}b{#1}}
\newcolumntype{C}[1]{>{\centering\let\newline\\\arraybackslash\hspace{0pt}}m{#1}}
\newcommand{\eqdef}{\ {\overset{\mathrm{\textnormal\tiny def}}{=}}\ }
\newcommand{\compeq}{\ {\overset{\mathrm{\textnormal\tiny c}}{\equiv}}\ }
\newcommand{\inr}{\in_{\mathrm{R}}}
\newcommand{\getr}{\stackon[0em]{$\;\leftarrow\;$}{$\mathrm{\scriptscriptstyle\$}$}}
\newcommand{\pbr}[1]{\left ( #1 \right )}
\newcommand{\sbr}[1]{\left [ #1 \right ]}
\newcommand{\cbr}[1]{\left \{ #1 \right \}}
\newcommand{\qbr}[1]{\left < #1 \right >}
\newcommand{\set}[1]{\left \{ #1 \right \}}
\newcommand{\abs}[1]{\left| #1 \right|}
\newcommand{\kstar}{\star}
\newcommand{\at}[1]{\left[ #1 \right]} %
\newcommand{\natset}{\mathbb{N}}
\newcommand{\realset}{\mathbb{R}}
\newcommand{\boolT}{{\tt T}}
\newcommand{\boolF}{{\tt F}}
\newcommand{\boolset}{\set{\boolT, \boolF}}
\newcommand{\zoset}{\set{0, 1}}
\newcommand{\head}{\mathsf{H}}
\newcommand{\tail}{\mathsf{T}}
\newcommand{\htset}{\set{\head, \tail}}
\newcommand{\triplet}[3]{\pbr{ #1, #2, #3 }}
\newcommand{\proj}[2]{\pi_{#1}\of{#2}}
\newcommand{\of}[1]{\pbr{#1}}
\newcommand{\alg}[1]{\mathcal{#1}}
\newcommand{\prob}[1]{\mathrm{Pr}\left[ #1 \right]}
\newcommand{\negligible}[1][]{\mu\ifthenelse{\equal{#1}{}}{}{\of{#1}}}
\newcommand{\nonnegligible}[1][]{\varepsilon\ifthenelse{\equal{#1}{}}{}{\of{#1}}}
\newcommand{\vcell}[2]{\multicolumn{1}{#1}{#2}}
\newcommand{\lift}[1][]{\mathcal{L}\ifthenelse{\isempty{#1}}{}{\of{#1}}}
\newcommand{\pwhileskip}{\mathbf{skip}}
\newcommand{\pwhileif}[3]{\mathbf{if}\; #1\; \mathbf{then}\; #2\; \mathbf{else}\; #3}
\newcommand{\pwhileloop}[2]{\mathbf{while}\; #1\; \mathbf{do}\; #2}
\newcommand{\pwhileeval}[1]{{\left\llbracket #1 \right\rrbracket}}
\DeclareMathOperator{\bind}{\mathsf{bind}}
\DeclareMathOperator{\return}{\mathsf{unit}}
\newcommand{\with}[1]{\set{#1}}
\newcommand{\rwrule}[2]{\with{\nicefrac{#1}{#2}}}
\newcommand{\ecproc}[1]{\mathsf{#1}}
\newcommand{\pwhileprog}[1]{\ifthenelse{\isempty{#1}}{\pwhileskip}{\pwhileverticalprogram{#1}}}
\newcommand{\stackstmt}[1]{#1 \\}%
\NewDocumentCommand{\pwhileverticalprogram}{ >{\SplitList{;}} m }{%
  \begin{array}{l}%
    \ProcessList{#1}{\stackstmt}%
  \end{array}%
}
\newcommand{\csgen}[1][]{\ifthenelse{\equal{#1}{}}{\alg{G}}{\alg{G}\of{#1}}}
\newcommand{\cscommit}[1][]{\ifthenelse{\equal{#1}{}}{\alg{C}}{\alg{C}\of{#1}}}
\newcommand{\csverify}[1][]{\ifthenelse{\equal{#1}{}}{\alg{V}}{\alg{V}\of{#1}}}
\newcommand{\sigmagen}[1][]{\ifthenelse{\equal{#1}{}}{\alg{G}}{\alg{G}\of{#1}}}
\newcommand{\sigmacommit}[1][]{\ifthenelse{\equal{#1}{}}{\alg{C}}{\alg{C}\of{#1}}}
\newcommand{\sigmatest}[1][]{\ifthenelse{\equal{#1}{}}{\alg{T}}{\alg{T}\of{#1}}}
\newcommand{\sigmarespond}[1][]{\ifthenelse{\equal{#1}{}}{\alg{R}}{\alg{R}\of{#1}}}
\newcommand{\sigmaverify}[1][]{\ifthenelse{\equal{#1}{}}{\alg{V}}{\alg{V}\of{#1}}}
\newcommand{\experiment}[1]{\mathsf{#1}}
\newcommand{\Exp}[1][]{\ifthenelse{\equal{#1}{}}{\experiment{Exp}}{\experiment{#1\,Exp}}}
\newcommand{\adversary}{\alg{A}}
\newcommand{\advantage}{\mathrm{\mathbf{Adv}}}
\newcommand{\oracle}{\alg{O}}
\newcommand{\dist}{\alg{A}}
\newcommand{\precondition}{{\Psi}}
\newcommand{\postcondition}{{\Phi}}
\newcommand{\leakable}[1]{\widetilde{#1}}
\newcommand{\secret}{\mathrm{H}} %
\newcommand{\leaked}{\mathrm{L}}
\DeclareMathOperator{\isleaked}{\Lambda}
\newcommand{\rnd}[1]{\oset{\text{\fontsize{6}{4}\selectfont\$}}{#1}}
\newcommand{\rndasgn}{\rnd{<-}}
\newcommand{\secasgn}{\hookleftarrow}
\newcommand{\secrnd}{\rnd{\secasgn}}
\newcommand{\ecmem}[1][m]{\mathrm{\mathbf #1}}
\newcommand{\ecmemfamily}{\mathcal{M}}
\newcommand{\inecmem}[1]{\qbr{#1}}
\newcommand{\invariant}{\oset{\secasgn}{I}}
\newcommand{\eager}{\mathsf{Eager}}
\newcommand{\pWhile}{\mathsf{pWhile}}
\newcommand{\pifWhile}{\mathsf{plWhile}}
\newcommand{\tactic}[1]{\mathsf{#1}}
\newcommand{\removelatexerror}{\let\@latex@error\@gobble}
\title{A Direct Lazy Sampling Proof Technique in Probabilistic Relational Hoare Logic}
\author{\IEEEauthorblockN{Roberto Metere$^\star$}%
\IEEEauthorblockA{University of York, UK}
\and
\IEEEauthorblockN{Changyu Dong$^\star$}%
\IEEEauthorblockA{Guangzhou University, China}
\thanks{$^\star$ Both authors performed part of this work while affiliated with Newcastle University, UK.}
}
\begin{document}
\maketitle

\begin{abstract}
Programs using random values can either make all choices in advance (eagerly) or sample as needed (lazily).
In formal proofs, we focus on indistinguishability between two lazy programs, a common requirement in the random oracle model (ROM).
While rearranging sampling instructions often solves this, it gets complex when sampling is spread across procedures.
The traditional approach, introduced by Bellare and Rogaway~\cite{bellare2004game}, converts programs to eager sampling, but requires assuming finite memory, a polynomial bound, and artificial resampling functions.
We introduce a novel approach in probabilistic Relational Hoare Logic (\prhl) that directly proves indistinguishability, eliminating the need for conversions and the mentioned assumptions.
We also implement this approach in the \EasyCrypt theorem prover, showing that it can be a convenient alternative to the traditional method.
\end{abstract}

\section{Introduction}
\label{sec:intro}

A copious amount of cryptographic proofs are based on the concept of indistinguishability between two different programs.
These proofs can be very complex and intricate.
To break down their complexity, intermediate subsequent variations of those programs are created, which differ by very little to one another, e.g., few lines of code.
Once the indistinguishability between all the adjacent constructions is demonstrated, they are finally combined to prove the original statement, as detailed by Shoup~\cite{shoup2004sequences}.
Indistinguishability is a security property that can be proven against a {\em distinguisher} who can interact with an oracle that has been programmed to expose the functionality (but not the memory) of either program, at random.
Then, the distinguisher is challenged to guess which program they interacted with through the oracle.
We provide a formal description in Section~\ref{sec:indistinguishability}.

In this paper, we focus on the particular problem of proving indistinguishability of two (similar) programs that both store values sampled from a distribution, but differ in how the values are internally memorized: in particular, sampling operations are executed in different procedures.
Depending when random choices are made, a probabilistic program is called {\em eager} if all choices are made upfront before its execution, or {\em lazy} if some of them are delayed until they are actually needed.
We consider the case when two programs perform lazy sampling at different levels, i.e., one program is lazier than the other.
An example can be a security property that is based on the construction of a simulator to prove privacy of secrets~\cite{bellare2006security,barthe2009formal,stoughton2017mechanizing,almeida2019machine}, where secrets are not provided to the simulator.
Security is achieved if an efficient distinguisher who can tell the real execution and the simulated execution apart does not exist\footnote{{\em Computational} indistinguishability allows a probabilistic polynomial-time adversary to succeed with a negligible probability.}.
Commonly such simulators would (need to) defer some probabilistic assignments to later or move them to other procedures, de facto acting {\em lazier} than the real program.

\subsection{The problem}
\label{sec:problem}
The difficulty in a formal proof of lazy sampling originates by two aspects.
First, values assigned at runtime do not carry further information, e.g., if the value had originated from a deterministic or a probabilistic assignment.
With an example, if walking into a pub we spot a coin showing head, we know not if the coin was tossed or was simply faced up that way and forgotten on the table.
Second, the programs we want to compare are generated by an adversary who can choose to call the program procedures in any order and any amount; in other words, programs are partially abstract as, even if we know the code inside the procedures, we cannot know how many calls and in what order they are made.
A proof of indistinguishability of two lazy programs, $P_1$ and $P_2$, always shows, at some point, a proof obligation of equivalence, where a value in the memory of one program, $x$, has to be shown equivalent to the result of a probabilistic assignment, i.e., a sampling operation, in the other program.
Proof obligations of indistinguishability are structured as (i) a precondition (a set of hypotheses) on the memory of two programs, (ii) two blocks of code executed by each programs, and (iii) a post-condition on the memory of the programs after the execution of their blocks.
An intuition is illustrated in Fig.~\ref{fig:lazy-sampling-problem}.
\begin{figure}[htbp]
  \includegraphics[width=\columnwidth]{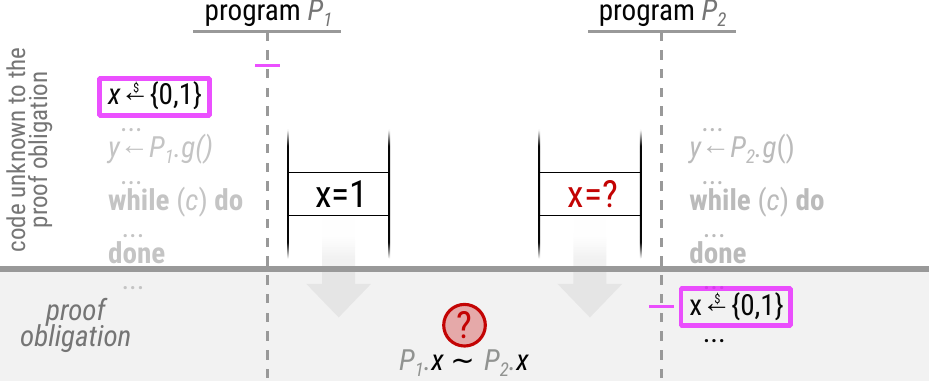}
  \caption{Intuition of how the proof obligation for the lazy sampling problem looks like.}
  \label{fig:lazy-sampling-problem}
\end{figure}
We note that the part of $P_1$'s flow that precedes the current point of the proof (proof obligation) performs the same sampling operation on $x$ (boxed) as $P_2$'s current flow.
Only their effect in memory are known to the proof obligation.
In principle, we could provide some additional hypotheses to the proof obligation about the history of $x$ in the memory of $P_1$; however, the required formality for this is far from trivial at least for two reasons.
First, those hypotheses must refer to memory location and ``preserve'' memory to store additional information.
This translates to the fact that program code can affect those hypotheses in a way that it can easily break soundness.
We develop this concept when we prove the soundness of our strategy, see Section~\ref{sec:soundness}.
Second, the flow of both programs is party abstract, thus one must be able to find hypotheses that are valid for {\em all} possible abstractions.
This point is crucial as, in general and without additional knowledge, the equivalence of a freshly sampled value and a specific value in memory does not hold.

\subsection{Current approach}
\label{sec:state-of-the-art}
Traditionally and currently, lazy sampling proofs are solved in those cases where lazy programs can be transformed into eager programs.
When comparing two eager programs, in fact, the problem described above cannot occur, as the sampling distribution is substituted by randomly-filled, read-only, single-use memory locations before the execution.
This technique was initially proposed by Bellare and Rogaway~\cite{bellare2004game} roughly 20 years ago; they also showed its application to indistinguishability proofs of cryptographic games~\cite{bellare2006security}.
However, they did not provide a formal proof of this technique and, as it often happens in pen-and-paper proofs, many details are omitted or given for granted.
To the best of our knowledge, the first (and only) detailed description of the eager approach has been provided by Barthe et al.~\cite{barthe2009formal}, formalized in the probabilistic Relational Hoare Logic (\prhl).
The \prhl{} is an extension of the Hoare logic where programs are probabilistic and the logic allows for equivalence proofs of programs.
Interestingly, their formalization highlighted the complexity of formally mechanizing lazy samplings with the eager approach: in fact, the technique is based on the injection of an artificial {\em resample} procedure into the original programs, then prove equivalence with interprocedural code motion\footnote{With code motion, statements would be moved within a program, generating a variant that would produce the same functionality as the original program at runtime, despite of the changes.}.
Such formalization actually requires special, complex proof tactics.
The high level description of equivalence of two $P_1$ and $P_2$, denoted as $P_1 \compeq P_2$ is illustrated in Fig.~\ref{fig:contribution-lazy-sampling} and described as follows.
\begin{figure}[htbp]
  \includegraphics[width=\columnwidth]{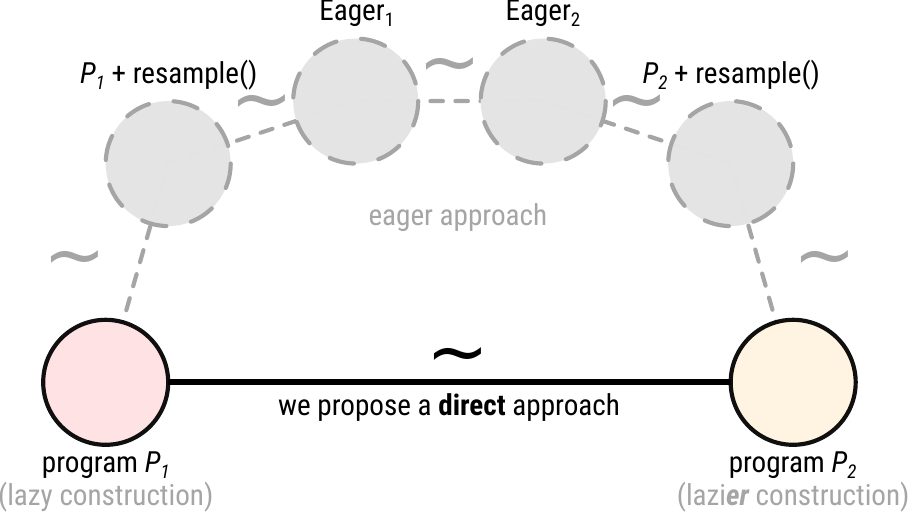}
  \caption{We propose an approach that keeps track of previous sapling operations to allow for direct proofs without artificial intermediate constructions.}
  \label{fig:contribution-lazy-sampling}
\end{figure}
First, $P_1$ and $P_2$ are transformed to $P'_1$ and $P'_2$, each enriched with a {\em resample} procedure, and then converted to $\eager_1$ and $\eager_2$ where all the laziness is removed.
Then one has to prove $P_1 \compeq P'_1$, $P_2 \compeq P'_2$, $P'_1 \compeq \eager_1$, $P'_2 \compeq \eager_2$, $\eager_1 \compeq \eager_2$, and finally compose them.
The equivalence $P_1 \compeq P'_1$ and $P_2 \compeq P'_2$ are trivial, as the artificial resample procedure is not exposed as a functionality to the adversary.
The equivalence $\eager_1 \compeq \eager_2$ can be directly done because values are pre-sampled; so the main problem illustrated in Fig.~\ref{fig:lazy-sampling-problem} is split into both $P'_1 \compeq \eager_1$ and $P'_2 \compeq \eager_2$, where, once you prove either, the other proof should follow the same steps.
Conversions to eager programs require two additional restrictions or assumptions: i) the space of the sampled values must be finite, otherwise an eager construction will possibly never terminate; and ii) especially when complexity is to be considered, the whole eager process has to be bound by a polynomial\footnote{The polynomial bound to the entire eager construction affects its space of the sampled values too, that must be polynomial.}, not to jeopardize the final (computational) result.
The question that naturally arises is: {\em can a direct proof of lazy sampling, that does not require the creation of additional programs, be done?}
Not only we answer positively to this question, but also remove additional assumption required by the eager approach.

\subsection{Our approach and contribution}
\label{sec:contribution}
Our solution differs significantly from the current approach.
Our core idea is to keep track of whether a value associated to a variable has been sampled or not, and from which distribution.
This extra information is recorded along with the value and cannot be directly modified by the program, i.e., it is known only in the environment of execution.
This allows us to specify hypotheses that are valid for {\em all} possible abstractions and, as such, be written in the precondition of the proof obligation that typically arises in lazy samplings, and that we illustrated in Fig.~\ref{fig:lazy-sampling-problem}.

To the best of our knowledge, this is the first time that such a theoretical proof technique for indistinguishability has been proposed and implemented for imperative code.
The problem we address occurs in many indistinguishability proofs: the complexity of eager tactics may discourage cryptographers from attempting to mechanize (part of) their proofs, as they would need considerably more effort to complete them.
With the above considerations, we consider our approach as an elegant alternative solution to the existing one.

Our contribution to the body of knowledge is a twofold:
\begin{itemize}
  \item A novel, alternative, elegant proof strategy to lazy sampling that does not require the theoretical additional assumptions of the eager-lazy approach and allows for direct proofs without intermediate and artificial constructions..
  \item Finally, we implement our strategy in the reasoning core of \EasyCrypt to compare it with an implementation of the eager approach to demonstrate its potential to simplify formal proofs.
\end{itemize}

In the next section, we describe the formality that describes both the problem and our strategy, the \prhl, as well as a formal definition of indistinguishability.
For completeness, we provide a description of the traditional approach in Appendix~\ref{sec:eager-lazy-approach}.
Sections~\ref{sec:ifsupport}~and~\ref{sec:pifwhile} are dedicated to how we extend the language at the basis of the reasoning logic (\prhl) to allow to keep track of additional information along variables, along with dedicated proof tactics to handle the extended language.
We illustrate a proof sketch of the soundness of our approach in Section~\ref{sec:tactics}, and provide a complete, formal proof of our lemmas and theorems in Appendix~\ref{sec:formal-proofs}.
The paper concludes with a re-implementation of the two lazy programs found in the official eager approach in \EasyCrypt and makes a comparison of lines of code and other aspects, to show that our approach can get simpler and shorter proofs.

\section{Preliminaries}
\label{sec:preliminaries}

In this section, we introduce the formality that we use: the cryptographic notions required to formalize indistiguishability, \prhl, as well as the probabilistic Relational Hoare Logic.

\subsection{Indistinguishability}
\label{sec:indistinguishability}
Indistinguishability is a central notion in the theory of cryptography.
We provide a definition based on a cryptographic experiment where a challenger plays against a probabilistic polynomial-time (PPT) adversary, called a distinguisher, who is challenged to tell two probability distributions apart.
For example, the security of an encryption system is defined as the indistinguishability of ciphertexts, where the adversary is asked for two plaintexts and then challenged with the encryption of one of them; the adversary should not be able to guess which plaintext corresponds to what challenge (significantly) better than a coin toss.

The cryptographic experiment of indistinguishability $\experiment{Exp}_{P_0, P_1}^{\dist,\oracle}$, as illustrated in Fig.~\ref{fig:oracle-exp-dist}, relates the two constructions $P_0$ and $P_1$ and can be described as follows.
\begin{enumerate}
  \item The challenger initializes the constructions $P_0$ and $P_1$.
  \item The challenger flips a coin to program the oracle $\oracle$ to use either $P_0$ or $P_1$.
  \item The adversary is allowed to interact with the oracle $\oracle$ a polynomial number of times\footnote{The adversary gets no extra information from the oracle if not the output of the interrogated functions.}, then she sends her best guess to the challenger.
  \item Finally, the challenger outputs whether the adversary, $\dist$, guesses correctly or not.
\end{enumerate}
\begin{figure}[htbp]
  \includegraphics[width=\columnwidth]{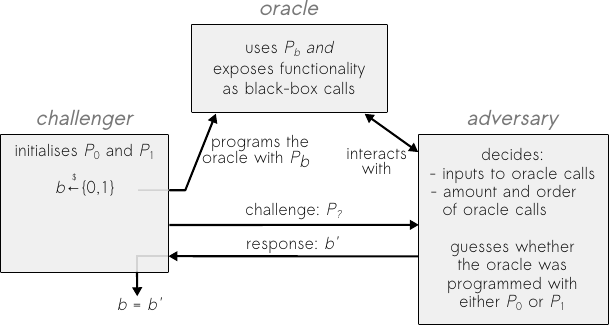}
  \caption{Cryptographic experiment. The adversary $\dist$ is allowed to interact with the oracle $\oracle$ at most a polynomial number of times to tell the two constructions $P_0$ and $P_1$ apart}
  \label{fig:oracle-exp-dist}
\end{figure}
The pseudo-code of a such cryptographic experiment is provided in Fig.~\ref{fig:oracle-exp-dist-pseudocode}.
\begin{figure}[htbp]
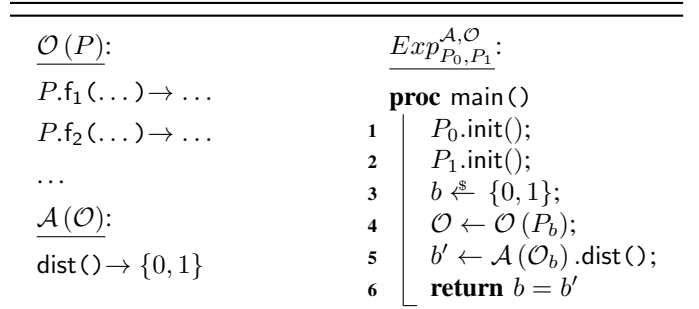

  \removelatexerror
  \begin{algorithm}[H]
    \SetKwFunction{main}{$\mathsf{main}$}{}
    \SetKwFunction{distinguish}{$\mathsf{dist}$}{}
    \SetKwFunction{init}{$\mathsf{init}$}{}
    \SetKwFunction{f}{$\mathsf{f_1}$}{}
    \SetKwFunction{g}{$\mathsf{f_2}$}{}
    \SetKwProg{Procedure}{proc}{}{}
    \SetKwProg{PureProcedure}{}{}{}

    \begin{multicols}{2}
        \SetAlgoNoEnd
        \nonl \underline{$\oracle\of{P}$}: \\
        \BlankLine
        \nonl \PureProcedure{$P$.\f{$\dots$}$ -> \dots$}{}
        \nonl \PureProcedure{$P$.\g{$\dots$}$ -> \dots$}{}
        \nonl \dots
        \BlankLine

        \nonl \underline{$\dist\of{\oracle}$}: \\
        \BlankLine
        \nonl \PureProcedure{\distinguish{}$ -> \zoset$}{}

        \newpage
        \SetAlgoShortEnd

        \nonl \underline{$Exp_{P_0, P_1}^{\dist, \oracle}$}: \\
        \BlankLine
        \nonl \Procedure{\main{}}{
          $P_0.\init()$; \\
          $P_1.\init()$; \\
          $b \getr \zoset$; \\
          $\oracle \gets \oracle\of{P_b}$; \\
          $b' \gets \dist\of{\oracle_b}.\distinguish{}$; \\
          \Return $b = b'$ \\
        }
    \end{multicols}
      \BlankLine %
  \end{algorithm}
  \caption{Pseudo-code of the cryptographic experiment.}
  \label{fig:oracle-exp-dist-pseudocode}
\end{figure}
As security is defined against all PPT adversaries, the interaction between the distinguisher $\dist$ and the oracle $\oracle$ is captured abstractly, i.e., $\oracle$ is passed to $\dist$ as an argument so that $\dist$ can interact with $\oracle$ according to any strategy.
The adversary is aware of the security parameter $n$ of reference, so we write $\dist$ as a short notation for $\dist\of{1^n}$.
We remark that our indistinguishability experiment and our definition of computational indistinguishability are a generalization of common indistinguishability experiments~\cite{shoup2004sequences,katz2014introduction}, and are based on the textbook definition of {\em computational indistinguishability of ensembles} by Katz and Lindell~\cite{katz2014introduction}, with the only change that the ensembles are generated by an indistinguishability experiment involving oracles, cf. Appendix~\ref{sec:probability-ensembles}.

\begin{definition}[Advantage]
\label{def:advantage}
Given the experiment $\Exp_{P_0, P_1}^{\dist,\oracle}$ of indistinguishability of two programs $P_0$ and $P_1$ that use the same functionality run against the adversary $\dist = \dist\of{1^n}$, then the distinguishing advantage $\advantage$ of the adversary is defined as
\[
  \advantage^{\Exp^{\dist,\oracle}}_{P_0,P_1}\of{1^n} \eqdef \left| \prob{\dist\of{\oracle_0} = 1} - \prob{\dist\of{\oracle_1} = 1} \right|.
\]
where $\oracle_0$ and $\oracle_1$ are the oracle programmed with $P_0$ and $P_1$ respectively.
\end{definition}
The events $\sbr{\dist\of{\oracle_b} = b'}$ happen when the adversary interacts with the oracle $\oracle_b$ and outputs $b'$, i.e. guesses correctly in the case $M_{b'}$ programmed the oracle, incorrectly in the case $M_{1 - b'}$.
The concept of {\em advantage} allows us to easily define the computational indistinguishability if related to a negligible function as the upper bound.
\begin{definition}[Computational indistinguishability with oracles]
\label{def:oracle-indistinguishability}
Given the experiment $\experiment{Exp}_{P_0, P_1}^{\dist, \oracle}$ of indistinguishability over two (terminating) constructions $P_0$ and $P_1$ of the same functionality and run against the adversary $\dist$, then the probabilistic ensembles $\alg{X}$, interacting with an oracle $\oracle$ programmed with $P_0$, $\oracle_0$, and $\alg{Y}$, interacting with $\oracle$ when programmed with $P_1$, $\oracle_1$, are {\em computationally indistinguishable}, denoted as $\alg{X} \compeq \alg{Y}$, if and only if for any probabilistic polynomial-time adversary $\dist$ exists a negligible function $\negligible$ such that
\[
  \advantage^{\experiment{Exp}^{\dist, \oracle}}_{P_0,P_1}\of{1^n} \leq \negligible[n],
\]
where $n$ is the security parameter, and $\dist$ has no access to the internal state of the oracle, which is programmed uniformly at random with $P_0$ or $P_1$.
\end{definition}
For simplicity, the constructions themselves are said to be indistinguishable, explicitly omitting reference to both their probabilistic ensembles and the oracles, with the notation
\[
  P_0 \compeq P_1.
\]
In the proof we show in Section~\ref{sec:lazy-sampling-sample-usage}, we prove perfect indistinguishability, i.e. $\advantage^{\experiment{Exp}^{\dist, \oracle}}_{P_0,P_1}\of{1^n} = 0$.
This choice is driven by two reasons.
First, indistinguishability proofs where the advantage is negligible have exactly the same structure apart from the negligible case to be handled separately, but this fact is independent of the approach (eager-lazy or direct).
Second, the official \EasyCrypt construction with showcases the eager-lazy approach proves perfect indistinguishability as well; thus, it is easier to compare with.

We remark that stating only perfect indistinguishability definitions (as we need) would have required more changes from the former definitions than a simple substitution in Definition~\ref{def:oracle-indistinguishability}, i.e. $\negligible\of{n} = 0$.

\subsection{Describing algorithms in \prhl: the $\pWhile$ language}
\label{sec:lazy-sampling-reasoning-pwhile}

The \prhl reasons over programs in imperative code.
Programs lie inside modules, that are containers of global variables and procedures.
Procedures are written in the $\pWhile$ language, see Fig.~\ref{fig:pwhile}, and capture the idea of algorithms running in a memory $\ecmem \in \ecmemfamily$ as their execution environment, where $\ecmemfamily$ denotes the set of memories.
\begin{figure}[htbp]\hrulefill
  \[
    \begin{array}{rl l}
      C ::= & \pwhileskip        & \quad\mbox{\small no operation} \\
          | & V <- E             & \quad\mbox{\small deterministic assignment} \\
          | & V \rndasgn \Delta  & \quad\mbox{\small probabilistic assignment} \\
          | & C;\, C             & \quad\mbox{\small sequence} \\
          | & \pwhileif{E}{C}{C} & \quad\mbox{\small conditional branch} \\
          | & \pwhileloop{E}{C}  & \quad\mbox{\small while loop} \\
    \end{array}
  \]
  \caption{Syntax of the $\pWhile$ language. $V$ are variables bound in the memory of $C$, $E$ is an expression, $\Delta$ denotes a {\em distribution} expression.}
  \label{fig:pwhile}
\end{figure}
The semantics of an expression is standard and evaluates to deterministic values in a given memory, where we assume they are well-formed, e.g., an assignment $v <- e$ is well-formed only if the types of $v$ and $e$ match.
Distributions over a set $X$, that we denote as $\Delta_X$, are described by {\em distribution expressions} and follow a monadic formalization~\cite{audebaud2009proofs} that we do not (need to) fully describe.
For our purposes, it suffices to consider that a distribution expression evaluates to an abstraction, a lambda, on a memory that describes a {\em discrete} distribution over a specified typeset, representing a (sub-)probability measure.

Since we reason in a probabilistic setting, the execution, or evaluation, of a program $c$ in an initial memory yields a (sub-)probability distribution over program memories, that we denote as $\pwhileeval{c}: \ecmemfamily -> \Delta_{\ecmemfamily}$.
We use the same notation for the evaluation of expressions (whose results are values), distribution expressions (whose results are distributions over typesets) and execution of commands (whose results are distributions over memories), $\pwhileeval{\bullet}$, where typing disambiguates the context.
The denotational semantics of commands of the $\pWhile$ language is shown in Fig.~\ref{fig:pwhile-semantics}.
\begin{figure}[htbp]\hrulefill
  \[
    \begin{split}
      \pwhileeval{\mathbf{skip}} \ecmem            & = \return \ecmem \\
      \pwhileeval{v <- e} \ecmem                   & = \return \ecmem\rwrule{\pwhileeval{e} \ecmem}{v} \\
      \pwhileeval{v \rndasgn d} \ecmem             & = \bind \of{\pwhileeval{d} \ecmem} \of{\lambda f. \return \ecmem\rwrule{f}{v}}\\
      \pwhileeval{c_1; c_2} \ecmem                 & = \bind \of{\pwhileeval{c_1} \ecmem} \pwhileeval{c_2} \\
      \pwhileeval{\pwhileif{e}{c_1}{c_0}} \ecmem   & = \pwhileeval{c_{\pwhileeval{e} \ecmem}} \ecmem \\
    \end{split}
  \]
  \caption{Selection of denotational (monadic) semantics of $\pWhile$~\cite{barthe2009formal}; we omit the loop as we do not make special mention of it in this paper.}
  \label{fig:pwhile-semantics}
\end{figure}

\subsection{Reasoning in the \prhl}
\label{sec:lazy-sampling-reasoning-pRHL}

Once programs are written in the $\pWhile$ language, one can reason about their deterministic (HL) or probabilistic (\phl and \prhl) behavior in the environment of a proof.

We borrow our notation for the semantics of the \prhl judgments from~\cite{benton2004simple,barthe2009formal,barthe2012probabilistic}.
Let us consider a probabilistic program $c$ and a pre and a post condition $\precondition$ and $\postcondition$.
Both the precondition and the post-condition can refer to variables in the memory and contain relations about them; additionally, the post-condition can refer to the return value of the procedures.
An HL judgment is a Hoare triplet, i.e., the $\postcondition$ is entailed by $\precondition$ and the execution of $c$, where $c$ is a probabilistic program.
As $c$ is probabilistic, $\postcondition$ may not be entailed always, but in relation to some probability $p$: \phl judgments relate their post-condition to a probability measure.
Both judgments must be valid for all memories where a program $c$ is executed, and they are denoted as
\[
  \begin{split}
    |= c : \precondition => \postcondition                & \qquad\mbox{HL  judgment} \\
    |= c : \precondition => \postcondition \;\lozenge\; p & \qquad\mbox{\phl judgment} \\
  \end{split}
\]
where $p$ is a real expression of probability and $\lozenge$ is a logic operation among $<$, $\leq$, $=$, $\geq$, and $>$.
For example, termination can be formally expressed as {\em losslessness}, that is the probability that the post condition $\boolT$ (true) holds after running $c$ is $1$,
\[
  |= c : \boolT => \boolT = 1.
\]

Let us consider two probabilistic procedures $c_1$ and $c_2$ %
and a pre and a post condition $\precondition$ and $\postcondition$.
The conditions are binary relations that can refer to memories where the procedures run.
We denote by $A_{\at{\ecmem_1,\ecmem_2, \dots}}$ the validity of the relation $A$ whose propositions can relate to the memories $\ecmem_1, \ecmem_2$ and so on; we also denote by $c\inecmem{\ecmem}$ a procedure $c$ running in the memory $\ecmem$; in relational judgments, we implicitly call $\ecmem_1$ and $\ecmem_2$ the memories relating to the leftmost and rightmost programs respectively.

In a \prhl judgment, the precondition is a relation over the memories as environments for the programs before their execution.
The post-condition, instead, does not directly relate to memories modified after the execution, but rather to the distribution of such memories.
In other words, the post-condition must be lifted to a relation over distributions over memories.
We denote the lifting of relations with $\mathcal{L}$.
For our purposes, we use this concept exactly as in Barthe et al.~\cite{barthe2009formal} who provided a complete description of its definition and properties.
We do not need to go into such details and we limit our formal introduction to \prhl to the following definition.
\begin{definition}[\prhl judgment~\cite{barthe2009formal}]
\label{def:prhl-judgement}
  We say that two programs $c_1$ and $c_2$ are equivalent with respect to a precondition $\precondition$ and a post-condition $\postcondition$ if and only if
  \begin{equation}
    \begin{split}
      |= c_1 \sim c_2 : \precondition => \postcondition \eqdef & \forall \ecmem_1,\ecmem_2 \in \ecmemfamily. \precondition_{\at{\ecmem_1,\ecmem_2}}  \\
      & => \lift[\postcondition]_{\at{\pwhileeval{c_1} \ecmem_1, \pwhileeval{c_2} \ecmem_2}}\\
    \end{split}
  \end{equation}
\end{definition}
\prhl judgments introduce a way to reason about probabilistic equivalences into the logic.
For example, we can be interested in proving the equivalence of two programs modeling an unbiased coin toss, where one program also flips the coin afterwards.
Their outputs, $r$, directly come from the same probabilistic assignment sampling from the uniform distribution $\htset$, $r \getr \htset$.
To make the two programs different, we flip the coin in the second program, $\ecproc{flip}\of{x} \eqdef \pwhileif{x = \head}{\tail}{\head}$.
Intuitively, flipping the coin does not jeopardize the equivalence of the programs, whose post-condition can be expressed as the equivalence of the program's outputs, $r\inecmem{\ecmem_1} = r\inecmem{\ecmem_2}$, or in short notation, $=_r$.
\[
    |= r \getr \htset \sim \pwhileprog{r \getr \htset; r <- \ecproc{flip}\of{r}} : \boolT => =_r.
\]
The two samplings will not always draw the same element from $\htset$; hence, if we relate the two memories, they would be sometimes equal, sometimes different and, as such, we are not able to prove (or disprove) the equivalence we seek.
Conversely, if we {\em lift} our conclusion to relate to distribution over memories instead, then the two assignments can be soundly treated by a tactic that requires them to draw from the same distribution, thus helping toward proving the equivalence of the two programs.

From the \prhl judgment in Definition~\ref{def:prhl-judgement}, one can derive probability claims to prove the security of cryptographic constructions.
Basically, we want to relate the \prhl judgment to two events, $\sigma_1$ and $\sigma_2$, that refer to the memories $\ecmem_1$ and $\ecmem_2$ in which $c_1$ and $c_2$ respectively run.
If the \prhl judgment and $\postcondition => \sigma_1 => \sigma_2$ are valid, then the judgment is interpreted as an inequality between probabilities, if the precondition $\precondition$ holds for every initial memory $\ecmem_1$ and $\ecmem_2$, denoted as $\precondition_{\at{\ecmem_1,\ecmem_2}}$.
We focus on equality (for perfect indistinguishability), that can be expressed as
\[
  \inference{|= c_1 \sim c_2 : \precondition => \postcondition & \precondition_{\at{\ecmem_1,\ecmem_2}} & \postcondition => \sigma_1 \Leftrightarrow \sigma_2}%
  {\prob{c_1\inecmem{\ecmem_1} : \sigma_1} = \prob{c_2\inecmem{\ecmem_2} : \sigma_2}}.
\]
In the case when $c_1$ or $c_2$ return a value, the events $\sigma_1$ and $\sigma_2$ may involve the return value, which we generically denote as $r\inecmem{\ecmem_1}$ and $r\inecmem{\ecmem_2}$.
The above equality can be used to reason about the {\em indistinguishability} of cryptographic constructions.
It is important to consider that not the whole content of memories is relevant in the post condition.
Also, from the point of view of an external entity, i.e., an adversary to a cryptographic game, only the {\em exposed} parts of the memory are relevant: in other words, it is important to restrict adversaries from accessing local variables used by other programs, e.g., by providing them with oracle calls.

The reasoning core for \prhl judgments offers many tactics, Fig.~\ref{fig:ec-tactics} illustrates the $\tactic{rnd}$ and $\tactic{assign}$, as they are the most relevant to our paper.
\begin{figure}[!ht]\hrulefill
  \[
    \inference[\text{[rnd]}:]%
      {
      \precondition\at{\ecmem_1,\ecmem_2} => \lift[\Theta]\at{\pwhileeval{\delta_1} \ecmem_1,\pwhileeval{\delta_2} \ecmem_2} \\
      \text{where } \Theta\at{v_1,v_2} = \postcondition\at{\ecmem_1\rwrule{v_1}{v},\ecmem_2\rwrule{v_2}{v}}
      }
      {|= v \getr \delta_1 \sim v \getr \delta_2 : \precondition => \postcondition}
  \]
  \[
    \inference[\text{[assign]}:]%
      {
      \precondition\at{\ecmem_1,\ecmem_2} = \postcondition\at{\ecmem_1\rwrule{\pwhileeval{e} \ecmem_1}{e},\ecmem_2\rwrule{\pwhileeval{e} \ecmem_2}{e}}
      }%
      {|= v <- e \sim v <- e: \precondition => \postcondition}
  \]
  \caption{Selection of rules of \prhl~\cite{barthe2009formal}.
  }
  \label{fig:ec-tactics}
\end{figure}
The tactic $\tactic{rnd}$ is used to prove that two sampling operations in two related programs are equivalent (probabilistic assignment); and the tactic $\tactic{assign}$ is used for the same reason but for deterministic assignments.
We will explain in more details and make use those tactics later in Section~\ref{sec:soundness}.

\section{Security labels}
\label{sec:ifsupport}
Our informal idea is to tag a variable with two labels.
One label associates its value with the distribution from where the value was sampled (in the case of a probabilistic assignment).
The other label associated its value with a {\em confidentiality status}, which is either $\secret$, interpreted as {\em unknown to the adversary}, or $\leaked$, interpreted as potentially {\em leaked} to the adversary.
The way we implement this idea is through the extension of a triplet in the language.
For the purpose of indistinguishability, we restrict our efforts to an implementation of all-or-nothing confidentiality, that is the value can either secure or leaked, leaving partial leakage to future extensions of our work.

We now introduce definitions and notation.
We denote the family of all the distributions over any set by $\Delta$, and the set of all the distributions over a generic set $X$ by $\Delta_X$.
Here we define two new types, {\em confidentiality} and {\em leakable}.
The former type models a set $\mathcal{C}$ with only two values, $\secret$ interpreted as {\em secret} and $\leaked$ interpreted as {\em leaked} to the adversary, $\mathcal{C} \equiv \set{\secret, \leaked}$.
The latter type models a family of sets that relate to a generic set $X$ whose elements are labeled with a distribution over $X$ and a confidentiality value:
\[
  \leakable{X} \equiv X \times \pbr{\Delta_X \cup \set{\bot}} \times \mathcal{C},
\]
where $\bot$ is used if the value is not associated with a sampling distribution.
Unions with the $\bot$ value can be easily implemented through option (or maybe) type.
Due to the nature of the leakable type as a native triplet, projection functions are already defined in the language.
For any $\leakable{x} = \triplet{x}{\delta}{\varepsilon} \in \leakable{X}$, we have the three projections:
  $\proj{1}{\leakable{x}} \eqdef x$, $\proj{2}{\leakable{x}} \eqdef \delta$, $\proj{3}{\leakable{x}} \eqdef \varepsilon$.
We define three additional functions over labeled values:
(i) $\isleaked: \leakable{X} -> \boolset$ testing whether a leakable value has been leaked or not,
(ii) $\inr: \leakable{X} -> \Delta_X -> \boolset$ whose output is $\boolT$ if the leakable value is sampled from the provided distribution and $\boolF$ otherwise, and
(iii) $\simeq: \leakable{X} -> \leakable{X} -> \boolset$
modeling the equality of two labeled values ignoring the confidentiality label.
So, let $\leakable{v} = \triplet{v}{\delta_v}{\varepsilon_v}$ and $\leakable{w} = \triplet{w}{\delta_w}{\varepsilon_w}$ be two labeled values over the set $X$ ($\leakable{v}, \leakable{w} \in \leakable{X}$) and $\delta \in \Delta_X$ be a distribution over the same set, then we define
\[
  \begin{split}
            \isleaked {\leakable{v}} \eqdef \varepsilon_v \neq \secret, \quad& \leakable{v} \inr \delta \eqdef \delta_v = \delta, \\
    \leakable{v} \simeq \leakable{w} \eqdef & v = w \land \delta_v = \delta_w. \\
  \end{split}
\]
An extract of the code implementing the feature above described is in Appendix~\ref{sec:ec-code}.

\section{The plWhile language}
\label{sec:pifwhile}

The imperative code that describes programs in the \prhl follows the syntax of the $\pWhile$ language~\cite{barthe2012probabilistic} as described in Figure~\ref{fig:pwhile}, and its semantics is respected by \prhl proof tactics.
Labeling could be implemented in the memory of the interpreter, transparently to the user of the language~\cite{barthe2019verifying}; however, we opted to describe labels as part of the language itself, extending its syntax with two new statements, {\em secure assignment} and {\em secure probabilistic assignment}, dedicated to exclusively manipulating labeled variables.

The reason for our choice is a threefold.
First, we easily protect labels from direct, unsound manipulation, e.g., sampling from a distribution $\delta$ to a variable and not updating its distribution label to $\delta$ is not allowed.
Second, we do not affect the semantics of other $\pWhile$ statements, therefore the soundness of all the theories across the theorem prover cannot be jeopardized by our extension (syntactic sugar).
And third, the user can choose which variables to label and control, thus avoiding superfluous or irrelevant analysis.
Nonetheless, the code produced using the extended syntax is still as clear as before, the usage of labeled values through their dedicated syntax is transparent.
In fact, the two new statements behave as regular deterministic and probabilistic assignments.
In particular, we extended the $\pWhile$ language with two syntax symbols to the $\pifWhile$ language to include labeled variables.
In details, we added $\secasgn$ for {\em secure deterministic assignment} and $\secrnd$ for {\em secure probabilistic assignment}.
A program in the $\pifWhile$ language is defined as the set of commands illustrated in Fig.~\ref{fig:pifwhile} and Fig.~\ref{fig:pifwhile-semantics} (denotational semantics), where the new two syntax elements are framed, and where $\leakable{V}$ is a labeled variable, and $\leakable{E}$ is a basic expression that can be either a labeled variable identifier or a map whose codomain is labeled.
\begin{figure}[htbp]\hrulefill
  \[
    \begin{array}{rll}
      \leakable{C} ::= & C                                                                        & \mbox{\small any $\pWhile$ command} \\
                | & \leakable{C};\, \leakable{C}                                                  & \mbox{\small sequence} \\
                | & \mathbf{if}\; E\; \mathbf{then}\; \leakable{C}\; \mathbf{else}\; \leakable{C} & \mbox{\small conditional branch} \\
                | & \mathbf{while}\; E\; \mathbf{do}\; \leakable{C}                               & \mbox{\small while loop} \\ \hline
    \vcell{|r}{|} & \strut V \secasgn \leakable{E}                                     & \vcell{l|}{\mbox{\small reading from labeled expression}} \\
    \vcell{|r}{|} & \leakable{V} \secrnd \Delta                                        & \vcell{l|}{\mbox{\small writing into labeled variable}} \\\hline
    \end{array}
  \]
  \caption{Syntax of the $\pifWhile$ language extending the $\pWhile$ language (new syntactic operators are framed).}
  \label{fig:pifwhile}
\end{figure}
\begin{figure}[htbp]\hrulefill
  \[
    \leakable{v} = \triplet{v}{\delta_v}{\varepsilon_v} \quad \leakable{e} = \triplet{e}{\delta_e}{\varepsilon_e}
  \]
  \[
    \begin{split}
      \pwhileeval{w \secasgn \leakable{e}} \ecmem      & = \return \ecmem\rwrule{\triplet{e}{\delta_e}{L}}{\leakable{e}}\rwrule{\pwhileeval{e} \ecmem}{w} \\
      \pwhileeval{\leakable{v} \secrnd \delta} \ecmem  & = \bind \of{\pwhileeval{\delta} \ecmem} \of{\lambda f. \return \ecmem\rwrule{\triplet{f}{\delta}{\secret}}{\triplet{v}{\delta_v}{\varepsilon_v}}} \\
    \end{split}
  \]
  \caption{Denotational semantics of $\pifWhile$ for the new syntax.}
  \label{fig:pifwhile-semantics}
\end{figure}

Roughly, $\secrnd$ is syntactic sugar for the probabilistic assignment $\rndasgn$ where the left hand side is additionally labeled as secret, $\secret$.
Afterwards, if a labeled variable is assigned to a regular variable, its security is {\em consumed} by $\secasgn$ that is semantically equivalent to a deterministic assignment $<-$ of the value along with its re-labeling as leaked, $\leaked$.

\section{Proof tactics}
\label{sec:tactics}
Tactics are transformation rules allowing (part of) a theorem, or goal, to mutate into another goal.
The proof is not completed until a final tactic is able to derive the tautology from the current theorem.
The goal structure can be transformed or split into several sub-goals to be independently proven, then finally combined altogether.
In this sense, the proof can be seen as an execution environment of an algorithm.
As introduced in Section~\ref{sec:lazy-sampling-reasoning-pRHL}, conclusions in the \prhl relate to the distribution of memories to reason about probabilities in cryptography.

We describe the three tactics related to the syntax introduced in Section~\ref{sec:pifwhile}, for the special types introduced in Section~\ref{sec:ifsupport}:
\begin{description}
 \item[declassify] makes a variable be labeled as leaked, $\leaked$, independently from its current labeling;
 \item[secrnd]     makes a variable in a probabilistic assignment be labeled as sampled from the sampling distribution, $\delta$, and secret, $\secret$; and
 \item[secrndasgn] works only when two procedures are in relation and makes a probabilistic assignment mutate to a simple assignment if the right conditions are met.
\end{description}
The tactics $\tactic{declassify}$ and $\tactic{secrnd}$ reflect the denotational semantics in Fig.~\ref{fig:pifwhile-semantics}, and work in the HL and \phl similarly, while in the \prhl can be called side by side.
Their rules are described in Fig.~\ref{fig:tactics-hl_phl}.
\begin{figure}[!ht]\hrulefill
  \[
    \inference[\text{[declassify]}:]%
              {
              r: Y & \leakable{x}: \leakable{Y} & \leakable{x} = \triplet{v}{\delta}{\varepsilon}
              \\
              |= \leakable{x} <- \triplet{v}{\delta}{\leaked}; r <- x : \precondition => \postcondition
              }
              {|= r \secasgn \leakable{x} : \precondition => \postcondition}
  \]
  \[
    \inference[\text{[secrnd]}:]%
              {
              \leakable{x}: \leakable{Y} & v : Y &
              v \notin FV\of{\ecmem} \\
              |= v \getr \delta; \leakable{x} <- \pbr{v, \delta, \secret} : \precondition => \postcondition \\
              }%
              {|= \leakable{x} \rnd{\secasgn} \delta : \precondition => \postcondition}
  \]
  \caption{Proof rules in the HL for $\tactic{declassify}$ and $\tactic{secrnd}$. Their correspondence to \phl and two-sided \prhl are not shown because they can be trivially constructed from them.}
  \label{fig:tactics-hl_phl}
\end{figure}
The tactic $\tactic{declassify}$ mutates the syntax of $\secasgn$ into two deterministic assignments: the first labels the right hand value of $\secasgn$ as leaked, and the second assigns the bare value of the labeled variable to the left hand value of $\secasgn$.
The tactic $\tactic{secrnd}$ mutates the syntax of $\secrnd$ into a probabilistic assignment and a deterministic assignment: the first binds a new\footnote{The variable has never been declared or used in the memory before this point in the proof.} variable $v$ in the memory $\ecmem$ where the algorithm runs, then it assigns to $v$ a value sampled from the distribution $\delta$ at the right hand of $\secrnd$, and the second assigns to the left hand the value $v$ labeled with $\delta$ and $\secret$ (secret).
We remark that at this point in a proof, the freshly sampled value must be secret; whether it will remain secret until the end of the proof
depends on the other parameters of the goal, precondition, post-condition and the successive statements.

The rule $\tactic{secrndasgn}$, illustrated in Fig.~\ref{fig:tactics-prhl}, is longer and involves an invariant that must hold before and after calling every corresponding procedure.
The intuition behind it is very simple: it equals the result of a sampling operation from a program to the one from the other program (last judgment).
The second last judgment that requires that the equality of sampled operations in both programs do not jeopardize the validity of the invariant.
All the other requirements are trivial.

When modeling a simulator or a random oracle, the values are usually stored in a map or a table.
We implemented support for maps that are commonly used in \prhl{}.
Such maps model functions where initially all the domain is (virtually) mapped to $\bot$ and it is interpreted as an empty map.
A map $t$ shaping a partial function from $X$ to $Y$ is defined to always reach an option codomain, $t: X -> Y \cup \set{\bot}$.
We use the notation $t\of{x}$ to denote the value of the domain element $x$ in the map $t$.
Map elements, e.g., $t\of{x}$, are treated in the $\pWhile$ language as variables, as, in common, they uniquely point to memory locations.
The domain of definition of the map $t$, denoted as $t_X$, is defined as $t_X \eqdef \set{x \in X | t\of{x} \neq \bot}$.
So an {\em empty map} is easily defined as the empty domain of definition of the map itself, $t = \emptyset \Leftrightarrow t_X = \emptyset$.

We will also make use of an invariant that relates two labeled variables or maps and a distribution; if relating to map, the invariant is universally quantified for all map entries.
Formally, given two maps $t, t': X -> \leakable{Y} \cup \set{\bot}$ and a distribution $\delta \in \Delta_Y$ over the set $Y$, we define the invariant for secure random assignment from the map $t'$ to the map $t$ as:
\begin{equation}
  \label{eq:invariant-maps}
  \begin{split}
  \invariant&\of{t,t',\delta} \eqdef \forall x \in X, \\
            & \left( x \in t'_X => t'\of{x} \inr \delta                                   \right) \\
    \land\; & \left( x \in t_X => x \in t'_X \land \proj{1}{t\of{x}} = \proj{1}{t'\of{x}} \right) \\
    \land\; & \left( x \in t_X => \isleaked {t\of{x}} => t\of{x} = t'\of{x}              \right) \\
    \land\; & \left( x \notin t_X => x \in t'_X => \lnot\isleaked {t'\of{x}}              \right). \\
  \end{split}
\end{equation}
The invariant ensures that for all elements $x$ in the domain $X$ the following properties hold:
(i) if $x$ is set in $t'$, then it is distributed as $\delta$;
(ii) if $x$ is set in $t$, then it is also set in $t'$ and both hold the same sampled value;
(iii) if $x$ is set in $t$ and it has been leaked to the adversary, then $x$ is also set in $t'$ and its image equal that in $t'$, $t\of{x} = t'\of{x}$; and
(iv) if $x$ is set in $t'$ but not in $t$, then the value (in $t'$) is secret.

The proof case supported by the tactic $\tactic{secrndasgn}$ is in a relational proof, where the procedure at the left shows a sampling from the distribution $\delta$ to a labeled variable $\leakable{v}$, followed by an assignment from it, and the procedure at the right shows the same assignment but without any sampling:
\[
  \leakable{v} \secrnd \delta; x <- v \quad\sim\quad x <- v \\
\]
The only case when these two algorithms behave the same is when $v$ in the rightmost algorithm is {\em secret} and its distribution label is $\delta$.
In fact, if a value has been sampled in the past and remains secret, then no output from any other function may have disclosed its content; therefore, we can mutate the sampling in the left algorithm with an assignment whose right value is the one in $v$ from the rightmost algorithm.
As shown in Fig.~\ref{fig:tactics-prhl}, the tactics split the current goal into two sub-goals, identified as the two \prhl{} judgments below the first two lines.
\begin{figure*}[!ht]\hrulefill
    \[
      \inference[\text{[secrndasgn]}:]%
      {
        t\inecmem{\ecmem_1}, t\inecmem{\ecmem_2} : X -> \leakable{Y} \cup \set{\bot} & x : X & \leakable{v}: \leakable{Y} & \delta : \Delta_Y & \leakable{v} \notin FV\of{\ecmem_1} & \ecmem_1^{\prime} = \ecmem_1 \text{ with } \leakable{v} \\
        x\inecmem{\ecmem_1} = x\inecmem{\ecmem_2} & \precondition' =  \pbr{\precondition \land \leakable{v}\inecmem{\ecmem_1'} = t\of{x}\inecmem{\ecmem_2}} \\
        |= \pwhileskip \inecmem{\ecmem_1^{\prime}} \sim \pwhileskip: \precondition' => \postcondition \land \leakable{v}\inecmem{\ecmem_1^{\prime}} = t\of{x}\inecmem{\ecmem_2} \land \lnot \isleaked \leakable{v}\inecmem{\ecmem_1^{\prime}} \land\ x \notin t_X\inecmem{\ecmem_1} \land\ x \in t_X\inecmem{\ecmem_2} \\
        |= \cbr{t\of{x} <- \leakable{v}; r \secasgn t\of{x}}\inecmem{\ecmem_1'} \sim r \secasgn t\of{x} : \precondition' => \postcondition \land t\of{x}\inecmem{\ecmem_1'} \inr \delta \land \invariant\of{t\inecmem{\ecmem_1'}, t\inecmem{\ecmem_2}, \delta} \\
      }%
      {|= t\of{x} \rnd{\secasgn} \delta; r \secasgn t\of{x} \sim r \secasgn t\of{x}: \precondition => \postcondition}
    \]
    \caption{The proof rule in the \prhl for {\em secrndasgn} on maps requires proving two judgments (sub-goals).}
    \label{fig:tactics-prhl}
\end{figure*}
The first goal requires that the value $\leakable{v}$ in the memory $\ecmem_1$ has not been leaked before the execution of the relevant statements: $\lnot \isleaked \leakable{v}\inecmem{\ecmem_1}$ -- this part includes a proof of the invariant.
The second goal mutates the probabilistic assignment into a deterministic assignment with the same exact sampled value in the other construction; again, the invariant is required to hold after this mutation.

In comparison, the theory from Bellare and Rogaway~\cite{bellare2004game} is obliged to re-sample unused values in order to make sure that the distribution over memories matches.
Even if we do not make a practical case on this detail, it is clear that if the value has been used somewhere else, additional care may be required.
For example, if a table stores randomly sampled indexes of another random table~\cite{bost2017forward}, resampling from the former table could cause inconsistency issues in the latter, that, if not amended, might lead to an adversarial strategy to distinguish otherwise indistinguishable programs.
Our strategy cannot be affected by this potential issue, as we do not resample.
Rather, we {\em borrow} the value sampled from the other program in the equivalence, as, conversely to the eager approach, we can be sure that the value is a result of a previous, equivalent sampling operation.
This guarantees that no further usage {\em need} to be adjusted where the stored value has been used\footnote{We remark that the use of a value does not automatically imply that it has been disclosed to the adversary.}.
Borrowing the sampled value from the other program is not particularly surprising, as their probabilistic equality is derived from two values, their distribution of origin and their non-leaked state.
The equality of values follows a similar semantics as the tactic $\tactic{rnd}$.
The rest is an additional requirement enforced by our invariant and the capability to securely label variables.

\subsection{Proof of soundness}
\label{sec:soundness}

Toward soundness, it is of crucial importance to forbid the coder to abuse the new syntax to deliberately modify the labels in the construction itself.
If deliberately labeled, in fact, the tactics introduced would make it possible to prove the false statement, breaking the overall soundness of the reasoning logic.
In practice this may enable automatic SMT solvers to find their way to prove everything, as implied from the assumption $\boolF$.
To avoid such situations, we our implementation forbids any direct or indirect usage (inside other expressions) of variables and maps with labels if not with their dedicated syntax; moreover, no regular tactics are able to discharge statements with labeled values.
The implementation of the above constraints is straightforward: while parsing user code, the {\em leakable} type is detected and checked against its dedicated syntax and allowed usages, an error is thrown otherwise.

The syntactic structures $\secasgn$ and $\secrnd$ that we introduced are simply a shorthand that combines multiple assignments, and the tactics that manipulate them in the proof environment can be seen more as a semantic refactoring rather than a full language extension.
Though, we also prohibit the user to, accidentally or purposely, arbitrarily modify labeled variables through the regular syntax.
As introduced in Section~\ref{sec:tactics}, the two tactics $\tactic{declassify}$ and $\tactic{secrnd}$, illustrated in Fig.~\ref{fig:tactics-hl_phl}, simply rewrite the newly introduced syntax to assignments that are already supported in \prhl{}.
Their soundness is trivially entailed by the language itself.

Conversely to the first two new tactics, the introduction of the tactic $\tactic{secrndasgn}$ requires more attention, as it introduces ambiguity of their meaning: in particular, it is able to mutate a probabilistic assignment into a deterministic assignment.
Let us analyze the (main) goal we want to prove true.
\[
  |= {t\of{x} \rnd{\secasgn} \delta; r \secasgn t\of{x}} \sim r \secasgn t\of{x}: \precondition => \postcondition
\]
The cases we are interested in are those where $x$ is identical in both memories, $=_x$ in the precondition $\precondition$, and the distribution of the results also match, $=_r$ in the post-condition $\postcondition$.
The difference is that the leftmost program samples a value, while the latter does not; so, our goal is {\em not} generally true.
Its provability depends on the content of memories $\ecmem_1$ and $\ecmem_2$: in particular, a sufficient condition for our goal to be true is if $t\of{x}$ is bound to the distribution $\delta$ in the rightmost memory, $\ecmem_2$, as shown by Lemma~\ref{th:virtual-swap-rnd}.
Other distributions, e.g., $\delta' \neq \delta$, could determine the same distribution over the memory, that could be shown through a bijection between $\delta'$ and $\delta$.
However, we limit ourselves to the sufficient case of using the same distribution.

The soundness of our newly introduced tactic $\tactic{secrndasgn}$ is delegated to our ability to artificially map this situation by means of relations over memories.
A high-level intuition on how we do it is illustrated in Fig.~\ref{fig:proof-intuition}.
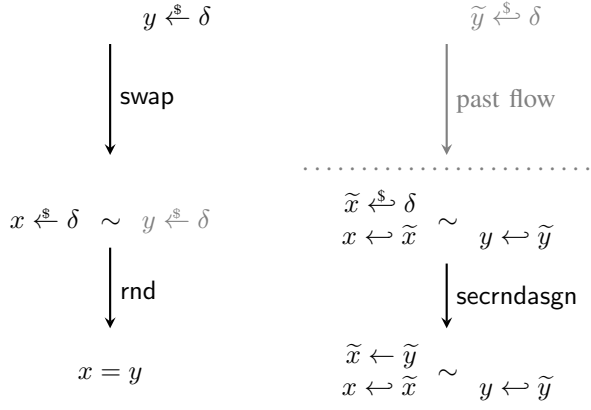
\begin{figure}[!ht]
  \newlength{\nodedistance}
  \setlength{\nodedistance}{2cm}
  \noindent\begin{minipage}[t]{.5\columnwidth}
  \begin{center}
    \begin{tikzpicture}[->,>=stealth,shorten >=1pt,auto,node distance=\nodedistance,thick,node/.style={}]
      \node[node,gray] (flow) {
        $
          \begin{array}{l}
            \color{white}y \secrnd \delta \\
          \end{array}%
            \color{white}\sim%
            \color{black}
          \begin{array}{l}
            y \getr \delta \\
          \end{array}
        $
      };
      \node[node,text width=4cm,align=center,white] (sep) [below of=flow] {
        \dotfill
      };
      \node[node,node distance=.7cm] (secrndasgn) [below of=sep] {
        $
          \begin{array}{r}
            x \getr \delta \\
          \end{array}%
            \sim%
          \begin{array}{l}
            {\color{gray} y \getr \delta} \\
          \end{array}
        $
      };
      \node[node] (rnd_asgn) [below of=secrndasgn] {
        $
          \begin{array}{r}
            \ \\
            \ \\
          \end{array}%
            x = y%
          \begin{array}{l}
            \ \\
            \ \\
          \end{array}
        $
      };

      \path (secrndasgn) edge node [right] {$\tactic{rnd}$} (rnd_asgn);
      \path (flow) [black] edge node [right] {$\tactic{swap}$} (sep);
    \end{tikzpicture}
  \end{center}
  \end{minipage}%
  \begin{minipage}[t]{.5\columnwidth}
  \begin{center}
    \begin{tikzpicture}[->,>=stealth,shorten >=1pt,auto,node distance=\nodedistance,thick,node/.style={}]
      \node[node,gray] (flow) {
        $
          \begin{array}{l}
            \color{white}\leakable{y} \secrnd \delta \\
          \end{array}%
            {\color{white}\sim}%
          \begin{array}{l}
            \leakable{y} \secrnd \delta \\
          \end{array}
        $
      };
      \node[node,text width=4cm,align=center,gray] (sep) [below of=flow] {
        \dotfill
      };
      \node[node,node distance=.7cm] (secrndasgn) [below of=sep] {
        $
          \begin{array}{r}
            \leakable{x} \secrnd \delta \\
            x \secasgn \leakable{x}
          \end{array}%
            \sim%
          \begin{array}{l}
            \ \\
            y \secasgn \leakable{y} \\
          \end{array}
        $
      };
      \node[node] (rnd_asgn) [below of=secrndasgn] {
        $
          \begin{array}{r}
            \leakable{x} <- \leakable{y} \\
            x \secasgn \leakable{x}
          \end{array}%
            \sim%
          \begin{array}{l}
            \ \\
            y \secasgn \leakable{y} \\
          \end{array}
        $
      };

      \path (secrndasgn) edge node [right] {$\tactic{secrndasgn}$} (rnd_asgn);
      \path (flow) [gray] edge node [right] {past flow} (sep);
    \end{tikzpicture}
  \end{center}
  \end{minipage}
  \caption{Intuition of the proof of Lemma~\ref{th:virtual-swap-rnd}.
  Left: sampling is moved ($\tactic{swap}$) to be inline with the corresponding one in the other program.
  Right: the tactic $\tactic{secrndasgn}$ handles labeled variables to overcome the limitation of the tactic $\tactic{rnd}$ that cannot access the ``past flow'', whose statements are not available at this point of the proof.}
  \label{fig:proof-intuition}
\end{figure}
Toward this goal, it is {\em mandatory} that the special labeled values cannot be manipulated (in syntax and in the proofs) in any other way if not through our specialized, controlled tactics.
Full proofs of the lemmas in this section can be found in Appendix~\ref{sec:formal-proofs}.

\begin{lemma}[Virtual $\tactic{swap/rnd}$]
\label{th:virtual-swap-rnd}
  Let $\ecmem_1$ and $\ecmem_2$ be the memories in the leftmost and rightmost programs in a \prhl judgment, $\delta$ a (lossless) distribution expression, and $t$ a map equally defined in both memories.
  Assume that
  \[
    \pwhileeval{t\of{x}} \ecmem_2 = \bind \of{\pwhileeval{\delta} \ecmem_2} \of{\lambda f. \return \ecmem_2\rwrule{f}{t\of{x}}}.
  \]
  then we have
  \[
    |= t\of{x} \getr \delta; r <- t\of{x}  \sim  r <- t\of{x}: =_x => =_r.
  \]
\end{lemma}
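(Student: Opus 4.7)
The plan is to reduce this judgment to an application of the \prhl{} $\tactic{rnd}$ rule of Fig.~\ref{fig:ec-tactics} by using the hypothesis to reshape the right-hand distribution over memories into a form that matches the left-hand one. Unfolding Definition~\ref{def:prhl-judgement}, I fix arbitrary $\ecmem_1, \ecmem_2$ with $\ecmem_1\of{x} = \ecmem_2\of{x}$ and aim to exhibit a witness for $\lift[=_r]$ between the two output sub-distributions over memories.

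First, I would compute both sides using the denotational semantics of Fig.~\ref{fig:pwhile-semantics}. The left-hand side unfolds, by sequence and monadic bind, to $\bind \of{\pwhileeval{\delta} \ecmem_1} \of{\lambda f. \return \ecmem_1\rwrule{f}{t\of{x}}\rwrule{f}{r}}$, while the right-hand side is the point distribution $\return \ecmem_2\rwrule{\pwhileeval{t\of{x}} \ecmem_2}{r}$. Using the hypothesis, this latter expression can be rewritten as $\bind \of{\pwhileeval{\delta} \ecmem_2} \of{\lambda g. \return \ecmem_2\rwrule{g}{t\of{x}}\rwrule{g}{r}}$: semantically, a ``silent resample'' of $t\of{x}$ on the right leaves the observable sub-distribution unchanged, precisely because the original value of $t\of{x}$ in $\ecmem_2$ was already $\delta$-distributed.

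With both sides now expressed as a bind against the same distribution $\delta$, the diagonal coupling that pairs each fresh draw $f$ on the left with the identical draw $g = f$ on the right witnesses $\lift[=_r]$: in every matched outcome both memories carry the same value at $r$. This is exactly the shape required by the premise of the $\tactic{rnd}$ rule, and the subsequent deterministic assignments preserve $=_r$ trivially because the precondition carries $=_x$ and $t$ is equally defined in both memories.

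The main obstacle is justifying the silent-resample rewriting rigorously: the hypothesis only equates two memory distributions, one obtained by ``reading'' $t\of{x}$ and the other by sampling into $t\of{x}$, so the trailing assignment $r \leftarrow t\of{x}$ must be shown to commute cleanly with this equivalence. The intuition is captured by the left diagram of Fig.~\ref{fig:proof-intuition}: a $\tactic{swap}$-style move aligns the fresh sampling on the left with the past sampling implicit in $\ecmem_2$, after which $\tactic{rnd}$ closes the gap. Formally, this reduces to bind-associativity together with a functional-extensionality argument on memory updates, which becomes routine once the coupling on $\delta$ is in place and the fact that $r$ is a fresh variable untouched by $\delta$ is exploited.
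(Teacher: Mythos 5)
Your proposal is correct and follows essentially the same route as the paper's own proof: both reduce the goal to matching the distribution of $t\of{x}$ (hence of $r$) on the two sides, use the hypothesis to view the right-hand read of $t\of{x}$ as a $\delta$-sample, and close with a diagonal $\tactic{rnd}$-style coupling. The paper phrases this at the tactic level ($\tactic{assign}$ followed by a one-sided $\tactic{rnd}$ with the memory-binding annotation kept ``on paper''), whereas you unfold the denotational semantics directly; the ``silent resample'' step you flag as the main obstacle is exactly the step the paper also leaves informal.
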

The situation depicted in Lemma~\ref{th:virtual-swap-rnd} is not fictional: one can easily reproduce it starting from a provable equivalence of programs where both programs sample, as we show in Appendix~\ref{sec:reproducing-virtual-swap-rnd}.
We now prove the same lemma where the new syntax substitutes the regular syntax.

\begin{lemma}[Virtual $\tactic{secrndasgn}$]
\label{th:virtual-secrndasgn}
  Let $\ecmem_1$ and $\ecmem_2$ be the memories in the leftmost and rightmost programs in a \prhl judgment, $\delta$ a distribution expression, and $t$ a map {\em with labeled codomain} and equally defined in both memories.
  Assume that
  \[
    \pwhileeval{t\of{x}} \ecmem_2 = \bind \of{\pwhileeval{\delta} \ecmem_2} \of{\lambda f. \return \ecmem_2\rwrule{\triplet{f}{\delta}{H}}{t\of{x}}}.
  \]
  then we have
  \[
    |= t\of{x} \rnd{\secasgn} \delta; r \secasgn t\of{x}  \sim  r \secasgn t\of{x}: =_x => =_r.
  \]
\end{lemma}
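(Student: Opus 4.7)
The plan is to reduce the new (\pifWhile) syntax to the ordinary ($\pWhile$) syntax using the sound rewrite tactics from Fig.~\ref{fig:tactics-hl_phl}, and then appeal to Lemma~\ref{th:virtual-swap-rnd} to discharge the remaining, label-free probabilistic content. The reason this strategy should work is that $\secrnd$ and $\secasgn$ are essentially syntactic sugar over a plain sampling/assignment combined with a deterministic update of the label triplet; once we strip off the label bookkeeping, only the value component matters, and for the value component we are in exactly the situation handled by Lemma~\ref{th:virtual-swap-rnd}.

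Concretely, I would proceed in three steps. First, apply the $\tactic{secrnd}$ rule to the leftmost statement: this rewrites $t\of{x} \rnd{\secasgn} \delta$ into the pair of commands $v \getr \delta;\ t\of{x} <- \triplet{v}{\delta}{\secret}$, where $v$ is a fresh auxiliary. Second, apply the $\tactic{declassify}$ rule on both sides of the remaining $r \secasgn t\of{x}$; this produces, on each side, a deterministic relabeling $t\of{x} <- \triplet{\proj{1}{t\of{x}}}{\delta}{\leaked}$ followed by the plain read $r <- \proj{1}{t\of{x}}$. After these reductions, the two sides differ only in the leading $v \getr \delta;\ t\of{x} <- \triplet{v}{\delta}{\secret}$ block on the left; using the hypothesis that in $\ecmem_2$ we have $t\of{x} = \triplet{f}{\delta}{\secret}$ with $f$ drawn from $\delta$, this block can be matched with the implicit sampling that already populated $t\of{x}\inecmem{\ecmem_2}$.

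At that point the remaining proof obligation is essentially of the form of Lemma~\ref{th:virtual-swap-rnd}, but lifted to the triplet type $\leakable{Y}$. Since both sides ultimately write the same triplet $\triplet{\cdot}{\delta}{\leaked}$ into $t\of{x}$, and since the $r$-component of both memories is read off from the value projection, the post-condition $=_r$ reduces to the equality of value projections of two samplings from the same distribution $\delta$. This is exactly Lemma~\ref{th:virtual-swap-rnd} applied to the underlying untagged map obtained by composing with $\proj{1}{\cdot}$; the other two projections ($\delta$ and the confidentiality tag) agree on the nose by construction.

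The step I expect to be the main obstacle is the second one: justifying that the $\tactic{declassify}$ rewrites on the two sides can be performed in lock-step despite their memories not yet being proved equal. The rewrite is sound because $\tactic{declassify}$ is a deterministic, semantics-preserving refactoring of $\secasgn$ into $<-$ (Fig.~\ref{fig:pifwhile-semantics}), so its soundness does not depend on any relational invariant; however, in the proof I must be careful that the intermediate triplet written into $t\of{x}$ uses the same $\delta$ label on both sides, which is precisely what the hypothesis on $\ecmem_2$ and the output of $\tactic{secrnd}$ on $\ecmem_1$ guarantee. Once this synchronisation is explicit, the appeal to Lemma~\ref{th:virtual-swap-rnd} is direct.
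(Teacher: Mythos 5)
Your proposal is correct and follows essentially the same route as the paper's own proof: de-sugar $\secrnd$ and $\secasgn$ via the $\tactic{secrnd}$/$\tactic{declassify}$ rewrites, observe that the label updates are dead code for the post-condition $=_r$, and reduce the value component (via $\proj{1}{\cdot}$ and the hypothesis on $\pwhileeval{t\of{x}}\ecmem_2$) to Lemma~\ref{th:virtual-swap-rnd}. The synchronisation concern you flag is handled in the paper by simply posing $v\inecmem{\ecmem_2} \eqdef \proj{1}{t\of{x}}\inecmem{\ecmem_2}$ before discarding the dead code, which is the same resolution you sketch.
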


One might wonder at this point why we did not just use some predicates in the pre- and post-condition to {\em keep track} of memory bindings.
The answer lies in the abstract definition of the adversarial strategy, that is how the security properties are verified {\em for all} adversaries.
In principle, one may attempt and use such predicates for concrete implementations, where calls to external functions (oracles) may manipulate the interested memory regions and can be unfolded into the program to be analyzed.
Indeed, we do use labeling on variables into predicates with exactly the purpose of {\em keep track} of memory bindings.
However this alone is not enough, as the adversary is an {\em abstract} program allowed to query external oracles in an {\em unpredictable} order and times.
This means that one must be able to draw conclusions without having a concrete definition of its program.
As such, there is no direct way to bring along information about when and how probabilistic assignments execute.
Another approach might be that of {\em unfolding} the oracle calls where sampling operations lie by making a case by case proof.
If theoretically one could cover {\em all} possible combinations of calls (assuming that it would suffice to cover a finite amount of them), it is clearly infeasible or not scalable.
We solve this problem by providing a way to {\em keep track} of memory bindings independently of when sampling operations are done, but to do so, we need to allow the related memory to store extra labels.

Lemma~\ref{th:virtual-secrndasgn} shows that the core idea is sound.
Though, it is still not of practical use as its hypothesis directly refers to memory bindings, that are not referable to in \EasyCrypt.
To de-virtualize it, we prove Lemma~\ref{th:practical-secrndasgn}.
\begin{lemma}[De-virtualized $\tactic{secrndasgn}$ hypothesis]
  \label{th:practical-secrndasgn}
  Let $\ecmem_1$ and $\ecmem_2$ be the memories in the leftmost and rightmost programs in a \prhl judgment, $\delta$ a (lossless) distribution expression, $t$ a map with {\em labeled codomain} and equally defined in both memories, and $\invariant$ the invariant as specified in Eq.~\ref{eq:invariant-maps}.
  If we call
  \[
    \begin{split}
    h_1 & \eqdef =_x, h_2 \eqdef \invariant\of{t\inecmem{\ecmem_1}, t\inecmem{\ecmem_2}, \delta}, \\
    h_3 & \eqdef x \notin t_X\inecmem{\ecmem_1}, \text{ and } h_4 \eqdef x \in t_X\inecmem{\ecmem_2}, \\
    \end{split}
  \]
  then
  \begin{equation}
    h_1 \land h_2 \land h_3 \land h_4
  \end{equation}
  implies that
  \[
    \pwhileeval{t\of{x}} \ecmem_2 = \bind \of{\pwhileeval{\delta} \ecmem_2} \of{\lambda f. \return \ecmem_2\rwrule{\triplet{f}{\delta}{H}}{t\of{x}}}.
  \]
\end{lemma}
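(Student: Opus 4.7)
The plan is to proceed in two conceptual steps: first, unpack the invariant $h_2$ together with $h_3$ and $h_4$ to pin down precisely the labels carried by $t\of{x}$ in $\ecmem_2$; then invoke the soundness of the labeling discipline (Section~\ref{sec:soundness}) to translate that label information into the distributional equality on memories demanded by the goal. Hypothesis $h_1$ enters only to ensure that the universally quantified variable in the invariant is consistently instantiated across the two memories, so that $h_3$ and $h_4$ refer to the same domain element.

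First I would extract the label structure of $t\of{x}$ in $\ecmem_2$. From $h_4$, $t\inecmem{\ecmem_2}\of{x}$ is a defined labeled triplet $\triplet{v}{\delta'}{\varepsilon}$ for some value $v$, distribution label $\delta'$, and confidentiality label $\varepsilon$. Applying conjunct~(i) of $\invariant$ (with the role of $t'$ played by $t\inecmem{\ecmem_2}$) yields $t\inecmem{\ecmem_2}\of{x} \inr \delta$, hence $\delta' = \delta$. Applying conjunct~(iv), using $h_3$ and $h_4$ in conjunction, yields $\lnot \isleaked t\inecmem{\ecmem_2}\of{x}$, hence $\varepsilon = \secret$. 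So $t\inecmem{\ecmem_2}\of{x} = \triplet{v}{\delta}{\secret}$ for some $v$.

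Second, I would interpret these labels distributionally. By the soundness constraints on the extended syntax discussed in Section~\ref{sec:soundness}, the only statement of $\pifWhile$ that can install the label pair $(\delta, \secret)$ on a labeled variable is a secure probabilistic assignment $\secrnd \delta$; direct manipulation of labels through the regular syntax is syntactically forbidden, and no other command of the language can alter them. Combined with the denotational semantics of $\secrnd$ in Fig.~\ref{fig:pifwhile-semantics}, this means the stored value of $t\of{x}$ together with its labels is semantically indistinguishable from the memory distribution $\bind \of{\pwhileeval{\delta} \ecmem_2} \of{\lambda f.\ \return\ \ecmem_2\rwrule{\triplet{f}{\delta}{\secret}}{t\of{x}}}$, which is exactly the right-hand side of the goal (recalling $\secret = \mathrm{H}$).

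The main obstacle is the second step: the claim that the labels $(\delta, \secret)$ faithfully reflect a still-fresh sample of $\delta$ is not a purely local property of the state but rests on invariant preservation under every admissible syntactic and tactical operation. A fully rigorous argument requires a structural induction over the program syntax and over the admissible proof tactics, showing that none of them can leave the memory in a state where labels and semantic provenance disagree. Given that soundness scaffolding, however, the present lemma reduces to the straightforward unfolding of the invariant sketched above.
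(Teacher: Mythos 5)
Your proposal is correct and follows essentially the same route as the paper's proof: instantiate the invariant at $x$, use $h_3$ and $h_4$ to extract $\proj{2}{t\of{x}}\inecmem{\ecmem_2} = \delta$ and $\proj{3}{t\of{x}}\inecmem{\ecmem_2} = \secret$, then argue that only a $\secrnd\ \delta$ statement could have installed that label pair (ruling out $\secasgn$ because it would mark the value as leaked) and read off the memory binding from the semantics in Fig.~\ref{fig:pifwhile-semantics}. The meta-level soundness caveat you raise about label provenance is also present, in the same informal form, in the paper's argument.
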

Now that we showed that we can use the precondition to capture the memory binding related to a sampling operation, we can prove the theorem at the core of our approach, Theorem~\ref{th:if-approach}, that justifies the proof rule in Fig.~\ref{fig:tactics-prhl}.
\begin{theorem}[Direct lazy sampling]
\label{th:if-approach}
  Let $\ecmem_1$ and $\ecmem_2$ be the memories in the leftmost and rightmost programs in a \prhl judgment, $\delta$ a (lossless) distribution expression, $t$ a map with {\em labeled codomain} and equally defined in both memories, and $\invariant$ the invariant as specified in Eq.~\ref{eq:invariant-maps}.
  Assume that
  \begin{equation}
    \begin{split}
    \precondition =\ & =_x \land\ \invariant\of{t\inecmem{\ecmem_1}, t\inecmem{\ecmem_2}, \delta} \\
    & \land x \notin t_X\inecmem{\ecmem_1} \land\ x \in t_X\inecmem{\ecmem_2}, \\
    \end{split}
  \end{equation}
  and let the post condition $\postcondition$ be
  \[
    \postcondition = (=_r)\ \land t\of{x}\inecmem{\ecmem_1} \inr \delta\ \land \invariant\of{t\inecmem{\ecmem_1}, t\inecmem{\ecmem_2}, \delta},
  \]
  then we have
  \[
    |= t\of{x} \rnd{\secasgn} \delta; r \secasgn t\of{x}  \sim  r \secasgn t\of{x}: \precondition => \postcondition.
  \]
\end{theorem}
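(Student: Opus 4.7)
The plan is to chain the two lemmas that immediately precede the theorem and then do some bookkeeping for the extra conjuncts in the post-condition. Fix initial memories $\ecmem_1, \ecmem_2$ satisfying $\precondition_{\at{\ecmem_1,\ecmem_2}}$. The four conjuncts of $\precondition$ are exactly $h_1,h_2,h_3,h_4$ of Lemma~\ref{th:practical-secrndasgn}, so that lemma immediately yields
\[
  \pwhileeval{t\of{x}} \ecmem_2 = \bind \of{\pwhileeval{\delta} \ecmem_2} \of{\lambda f. \return \ecmem_2\rwrule{\triplet{f}{\delta}{H}}{t\of{x}}},
\]
which is precisely the standing hypothesis of Lemma~\ref{th:virtual-secrndasgn}. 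Applying the latter discharges the $=_r$ part of the post-condition, i.e.\ the lifted relation $\lift[=_r]$ holds on the two output sub-distributions $\pwhileeval{t(x) \secrnd \delta; r \secasgn t(x)}\ecmem_1$ and $\pwhileeval{r \secasgn t(x)}\ecmem_2$.

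It remains to verify the two extra conjuncts of $\postcondition$: $t\of{x}\inecmem{\ecmem_1} \inr \delta$ and the invariant $\invariant\of{t\inecmem{\ecmem_1}, t\inecmem{\ecmem_2}, \delta}$, both evaluated on the post-execution memories. The first is immediate from the denotational semantics in Fig.~\ref{fig:pifwhile-semantics}: $t(x) \secrnd \delta$ stores the triplet $\triplet{f}{\delta}{\secret}$ at location $t(x)$, and the subsequent $r \secasgn t(x)$ only overwrites the confidentiality tag to $\leaked$, leaving $\proj{2}{t(x)} = \delta$ intact. For the invariant we perform a case analysis on the map index $y$.

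For $y \neq x$, none of the four clauses is affected since neither side touches $t(y)$, so each clause is inherited from the invariant in $\precondition$. For $y = x$, clauses (i) and (ii) follow because on the right only the confidentiality label of $t(x)$ was flipped (the distribution label and the value are untouched), while on the left the freshly sampled value coincides with $\proj{1}{t(x)\inecmem{\ecmem_2}}$ by the coupling provided by Lemma~\ref{th:virtual-secrndasgn}, and its distribution label is $\delta$; clause (iv) is vacuous since $x$ is now in $t_X\inecmem{\ecmem_1}$ after the sampling.

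The hard part will be clause (iii): $r \secasgn t(x)$ marks $t(x)\inecmem{\ecmem_1}$ as leaked on the left and $t(x)\inecmem{\ecmem_2}$ as leaked on the right simultaneously, so one must show the two full triplets agree. Value equality comes from Lemma~\ref{th:virtual-secrndasgn}, the distribution label $\delta$ matches on both sides thanks to Lemma~\ref{th:practical-secrndasgn}, and the freshly imposed $\leaked$ tag obviously matches; the subtlety is that these facts have to be threaded through the monadic $\bind$-composition of the two semantic rules in Fig.~\ref{fig:pifwhile-semantics}, so one should unfold both programs syntactically before reading off the post-state and then lift the conjunction of the three scalar equalities into $\lift[\postcondition]$ over the joint distribution.
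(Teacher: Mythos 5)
Your proposal is correct and follows essentially the same route as the paper's own proof: split $\postcondition$ into the three conjuncts, chain Lemma~\ref{th:practical-secrndasgn} into Lemma~\ref{th:virtual-secrndasgn} to discharge $=_r$, read off the distribution label from the semantics of $\secrnd$/$\secasgn$, and prove the invariant by a case split on the index, reusing the first two conjuncts to get the full triplet equality in clause (iii). The only difference is presentational — the paper carries out the de-sugaring of both programs explicitly for parts two and three, where you gesture at it — but no step is missing.
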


The result provided by Theorem~\ref{th:if-approach} proves the soundness of the tactic $\tactic{secrndasgn}$ in general, but it does not allow us to make practical use of it.
This is because we did not (need to) specify how we handle the value in $\proj{1}{t\of{x}}$ in the memories, in such a way that it can be used later on in the proofs.
In other words, we proved that the internal memory binds are correct according to the semantics that we defined, but \EasyCrypt needs an explicit way to model it in the proof environment.
In the eager-lazy approach, this aspect was solved by {\em injecting} an artificial re-sample operation in the rightmost program: perhaps, they could hardly do otherwise, as the approach itself was based on sampling upfront.
We notice that, an old re-sampled value could be used (locally) in other data structures, as an index, for example, in other maps (this happens in real-world constructions, e.g., searchable encryption~\cite{bost2018algorithmes}).
Hence, using the eager tactics would create an inconsistent state that can potentially and eventually lead to an adversarial strategy to distinguish otherwise indistinguishable constructions.
With our labels, we can solve the problem without re-sampling.

We know that any variable labeled as secret, $\secret$, has never {\em left} its local environment; in particular, the adversary has no clue about its value.
Thus, as we proved that the content of memories equals, we can simply {\em borrow} the value, that has already been sampled, from the rightmost program to the leftmost.
This idea is inspired by and perfectly aligns with the tactic $\tactic{rnd}$, where equality of sampled values can be stated in a similar fashion.
The technicality to enact this idea is to augment the leftmost memory, $\ecmem_1$, with a {\em fresh} identifier (unknown in the memory $\ecmem_1$) corresponding to a variable whose value is set to $\proj{1}{t\of{x}}\inecmem{\ecmem_2}$.
As we are not even modifying the memories (nor their distribution), we can do so with respect to Theorem~\ref{th:if-approach}.

\section{An example in pRHL proofs}
\label{sec:lazy-sampling-sample-usage}
To support our claim of simplicity of our approach, we consider two (lazy) programs implementing random functions and prove their perfect indistinguishability in the ROM.
We recreate the lazy sampling problem as introduced in Section~\ref{sec:problem} and present in (part of) the proof of the PRP/PRF Switching Lemma completed with the eager technique in~\cite{barthe2010programming}.
Nonetheless, we remark that {\em any} example would have sufficed, as tactics can just be used whenever their premises show up in a (sub-)goal.

The proof structure we implement is common especially in simulation-based proofs of complex constructions, e.g., when a simulator replaces structures depending on (unknown) private inputs with randomly sampled values~\cite{bellare2006security,barthe2009formal,stoughton2017mechanizing,bost2017forward,almeida2019machine}.

\subsection{The two lazy constructions}
Let us consider two constructions, illustrated in Figure~\ref{fig:if-example}, that are described by the two modules $P_1$ and $P_2$\footnote{Their corresponding implementation in \EasyCrypt with labeled maps is shown in Appendix~\ref{sec:lazy-constructions-ec}.}.
Both modules $P_1$ and $P_2$ emulate the behavior of a random function $f: X -> Y$ callable by the procedure $f$.
Additionally they provide a procedure $g$ to preemptively fill the internal map (if not already) with a value that can be later retrieved by calling $f$ on the same argument.
\begin{figure}[!ht]
  \small
  \removelatexerror
  \begin{algorithm}[H]
    \SetKwFunction{init}{$init$}{}
    \SetKwFunction{f}{$f$}{}
    \SetKwFunction{g}{$g$}{}
    \SetKwProg{Procedure}{proc}{}{}

    \begin{multicols}{2}
      \nonl \underline{$P_1$}: {\tt // lazier} \\
      \BlankLine
      \nonl $t: X -> \leakable{Y} \cup \set{\bot}$ \\
      \BlankLine
      \BlankLine
      \nonl \Procedure{\init{}}{
        $t <- \emptyset$; \\
      }
      \nonl \Procedure{\f{$x$}}{
        \If{$x \notin t_X$}{
          $t\of{x} \secrnd \delta$; \\
        }
        $r \secasgn t\of{x}$;\\
        \Return $r$ \\
      }
      \nonl \Procedure{\g{$x$}}{
        \Return \\
      }
      \newpage
      \setcounter{AlgoLine}{0}
      \nonl \underline{$P_2$}: {\tt // lazy} \\
      \BlankLine
      \nonl $t: X -> \leakable{Y} \cup \set{\bot}$ \\
      \BlankLine
      \BlankLine
      \nonl \Procedure{\init{}}{
        $t <- \emptyset$; \\
      }
      \nonl \Procedure{\f{$x$}}{
        \If{$x \notin t_X$}{
          $t\of{x} \secrnd \delta$; \\
        }
        $r \secasgn t\of{x}$;\\
        \Return $r$ \\
      }
      \nonl \Procedure{\g{$x$}}{
        \If{$x \notin t_X$}{
          $t\of{x} \secrnd \delta$; \\
        }
        \Return \\
      }
    \end{multicols}
    \BlankLine
  \end{algorithm}
  \caption{Two constructions $P_1$ and $P_2$; they differ only on the implementation of $g$.}
  \label{fig:if-example}
\end{figure}
Both implementations are lazy: the first time when the input $x \in X$ is used for calling the procedure $f$, a value $y$ is freshly sampled at random from a distribution $\delta$ and stored to an internal map $t$, the next times when $f\of{x}$ will be called, the value will be retrieved its image from $t$.
Additionally to $f$, the adversary can also call the procedure $g: X -> \set{\bot}$ whose signature is identical to $f$ apart from the empty return value, i.e., it produces no output.
The only difference between $P_1$ and $P_2$ is in the implementation of the function $g$, where in $P_1$ does nothing and in $P_2$ actually fills the map.
We notice that if $X$ is finite and $g$ is called upon the whole domain of $t$, the indistinguishability of $f$ is between a lazy and a eager program, reproducing the situation in the proof of the PRP/PRF Switching Lemma in~\cite[Fig.4]{barthe2010programming}.

\subsection{Indistinguishability - Lemma}
The two constructions $P_1$ and $P_2$ are computationally indistinguishable under the Definition~\ref{def:oracle-indistinguishability}.
An oracle $\oracle$ exposes the procedures $f$ and $g$ to be called upon any chosen input argument.
The oracle $\oracle$ is provided to the PPT adversary $\dist$ and can be programmed with either $P_1$ or $P_2$.
The oracle $\oracle$ automatically wraps the construction exposing only $f$ and $g$ and hiding any other procedure, i.e. $init$, and their memory, i.e., $t$ is not accessible.
We can show that, given a probability distribution $\delta$ over a set $Y$, then the advantage of distinguishing between the two constructions $P_1$ and $P_2$ is zero (perfect indistinguishability).
Following the definition of advantage given in Section~\ref{sec:indistinguishability}, the probability of distinguishing either of the constructions has to be the same.
The indistinguishability between $P_1$ and $P_2$ can be therefore captured by the following theorem.
\begin{theorem}
  \label{th:sample-indistinguishability}
  Given the two constructions $P_1$ and $P_2$ as in Figure~\ref{fig:ec-if-example}, a probability distribution $\delta$ over the set $Y$, %
  and the two internal maps of $P_1$ and $P_2$ have been initialized to be empty maps, then
  \[
    \prob{\dist\of{\oracle_1} = 1} = \prob{\dist\of{\oracle_2} = 1}
  \]
  implies that $P_1$ and $P_2$ are indistinguishable, i.e. $P_1 \compeq P_2$.
\end{theorem}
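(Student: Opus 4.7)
The plan is to reduce the probability equality to a \prhl judgment relating the adversary-oracle composition under $P_1$ and $P_2$, and then to discharge the two oracle procedures using the invariant \invariant{} from Eq.~\ref{eq:invariant-maps}. By the standard derivation given at the end of Section~\ref{sec:lazy-sampling-reasoning-pRHL}, proving
\[
  |= \dist(\oracle_1) \sim \dist(\oracle_2) : \top => =_{\mathit{res}}
\]
with an adversary-call rule suffices to establish the probability equality stated in Theorem~\ref{th:sample-indistinguishability}. Since $\dist$ is an abstract PPT program whose only access to the constructions is through $\oracle.f$ and $\oracle.g$, an adversary-rule of \prhl lets us decompose this judgment into (i) an initialization step, which trivially establishes $I(t_{\at{\ecmem_1}},t_{\at{\ecmem_2}},\delta)$ because both maps start empty, and (ii) equivalence of the two oracle procedures under $I$.

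For the $f$ procedure, I would split on the relational case of membership of $x$ in the two maps. When $x\in t_X^{\at{\ecmem_1}}$ (and therefore, by clause (ii) of $I$, also in $t_X^{\at{\ecmem_2}}$), no sampling occurs on either side and the returned values are equal by clauses (ii) and (iii) of $I$, which are preserved by the ensuing $\secasgn$. When $x\notin t_X^{\at{\ecmem_1}}$ and $x\notin t_X^{\at{\ecmem_2}}$, both sides execute a fresh $\secrnd$ followed by $\secasgn$; here the standard $\tactic{rnd}$ rule of Fig.~\ref{fig:ec-tactics} (combined with $\tactic{secrnd}$ from Fig.~\ref{fig:tactics-hl_phl}) establishes equality of the drawn value, and the invariant is re-established because the new entries are labeled $\secret$ and distributed as $\delta$ on both sides. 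When $x\notin t_X^{\at{\ecmem_1}}$ but $x\in t_X^{\at{\ecmem_2}}$, which is precisely the asymmetric situation introduced by $g$ in $P_2$, the preconditions of Theorem~\ref{th:if-approach} are met by $I$, and a single application of $\tactic{secrndasgn}$ from Fig.~\ref{fig:tactics-prhl} rewrites the left-hand $\secrnd;\secasgn$ into an $\secasgn$ that borrows the stored value; outputs are then equal and the invariant is preserved. Clause (ii) of $I$ rules out the fourth case where $x\in t_X^{\at{\ecmem_1}}$ but $x\notin t_X^{\at{\ecmem_2}}$.

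For the $g$ procedure, there is nothing to do on the left side. On the right, either $x\in t_X^{\at{\ecmem_2}}$, in which case $g$ is a no-op and $I$ is preserved trivially, or $x\notin t_X^{\at{\ecmem_2}}$ and a $\secrnd$ is performed. In the latter case, the semantics of $\secrnd$ from Fig.~\ref{fig:pifwhile-semantics} guarantees the new entry carries the label $(\delta,\secret)$, which re-establishes clause (i) and, since the left map is unchanged and $x$ still falls outside its domain, also clauses (ii), (iii), and (iv). Combining these per-call equivalences by the adversary rule yields $=_{\mathit{res}}$ under $I$, and hence the required probability equality.

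The main obstacle, and the step where the paper's machinery actually pays off, is the asymmetric $f$-case when $x\in t_X^{\at{\ecmem_2}}\setminus t_X^{\at{\ecmem_1}}$: a classical $\tactic{rnd}$ cannot couple a fresh draw on the left with a deterministic read on the right without appealing to the history of $\ecmem_2$, and this is exactly where the $\tactic{secrndasgn}$ tactic justified by Theorem~\ref{th:if-approach} closes the proof directly, bypassing any eager transformation. All remaining subgoals are routine case splits and invariant bookkeeping.
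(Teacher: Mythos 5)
Your proposal is correct and follows essentially the same route as the paper: the same decomposition into an initialization step plus per-procedure equivalences for $f$ and $g$ under the invariant of Eq.~\ref{eq:invariant-maps}, the same case analysis on map membership, and the same key application of $\tactic{secrndasgn}$ (justified by Theorem~\ref{th:if-approach}) to the asymmetric case $x\in t_X\langle{\ecmem_2}\rangle\setminus t_X\langle{\ecmem_1}\rangle$. Your explicit observation that clause (ii) of the invariant excludes the opposite asymmetric case is a small clarification the paper leaves implicit, but it does not change the argument.
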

The conclusion of the above theorem relates to the cryptographic experiment $\Exp_{P_1,P_2}^{\dist,\oracle}$ illustrated in Section~\ref{sec:indistinguishability} after a few algebraic steps whose details are discussed in Appendix~\ref{sec:theorem-experiment}.

\subsection{Indistinguishability - Proof}
We now prove the Theorem~\ref{th:sample-indistinguishability} using our novel approach.
The proof can be split into showing that the outputs of the corresponding procedures $f$ and $g$ in $P_1$ and $P_2$, upon the same inputs, are equally distributed.
To keep consistency during the proof, we also use the invariant $\invariant\of{t\inecmem{\ecmem_1},t\inecmem{\ecmem_2},\delta}$ defined in Section~\ref{sec:tactics}.
For simplicity, we denote $t\inecmem{\ecmem_1}$ as $t$ and $t\inecmem{\ecmem_2}$ as $t'$, so the invariant shows the same notation as defined in Section~\ref{sec:tactics}.
Formally, we have to prove the validity of the invariant before calling the distinguisher $\dist$, along with the following pRHL judgments:
\begin{align}
  |=\, & \{ \} \sim \{ \}: \Psi => \Phi \label{eq:subgoal1} \\
  |=\, & P_1.g\of{x} \sim P_2.g\of{x}: \Psi => \Phi \label{eq:subgoal2} \\
  |=\, & P_1.f\of{x} \sim P_2.f\of{x}: \Psi => \Phi \land r\inecmem{\ecmem_1} = r\inecmem{\ecmem_2} \label{eq:subgoal3}
\end{align}
where the precondition $\Psi$ includes the invariant $\invariant\of{t,t',\delta}$ and the requirement of the distiguisher's global variables to equal, and the post-condition includes the same invariant and (the latest judgment only) includes the equality of return values $r\inecmem{\ecmem_1}$ and $r\inecmem{\ecmem_2}$.

The proof of the judgment~\ref{eq:subgoal1}, where no procedures are called, is trivial, as nothing is executed and $\Phi$ is obviously entailed by $\Psi$.
The judgment~\ref{eq:subgoal2}, where $g$ is called, shares similarities with the first, as $P_1.g\of{x} = \{ \}$ is the empty algorithm and both produce no output.
From here, we have two cases
(a) when $x$ is already in the map $t$ of the memory $\ecmem_2$, and
(b), conversely, when $x \notin t'_X$ and about to be sampled.
In the case (a), we reduce to the first judgment~(\ref{eq:subgoal1}), while in (b) we reduce to $|= \{  \} \sim t\of{x} \rnd{<-} \delta : \Psi => \Phi$.

Here, the action of sampling into $t\of{x}\inecmem{\ecmem_2}$ does not jeopardize the validity of the invariant.
In particular, only the last part of the invariant may be affected, i.e., when $x \notin t_X$ but $x \in t'_X$.
So, if we apply the tactic $\tactic{secrnd}$, then we can keep track of the confidentiality of $t'\of{x}$ and prove that $\lnot \isleaked t'\of{x}$, as requested by the invariant.
Finally, the judgment~\ref{eq:subgoal3}, where $f$ is called, is the one that benefits the most from our work.
In all the cases when the value is in or is not in the domains of both $t$ and $t'$, i.e., $x \in t \land x \in t'$ or $x \notin t \land x \notin t'$, the algorithms of either side are exactly the same, and doing consistent operations in both maps does not affect the validity of the invariant.
The most interesting case is when $x \notin t_X$ but $x \in t'_X$.
Differently from the judgment~\ref{eq:subgoal2}, the procedures also produce an output, that corresponds to the values stored in the memories: $t\of{x}\inecmem{\ecmem_1}$ and $t\of{x}\inecmem{\ecmem_2}$.
In particular, we need to show that they are (probabilistically) the same, as in this case the judgment~\ref{eq:subgoal3} reduces to the following:
\[
  \begin{split}
    |= t\of{x} \rnd{<-} \delta; r <- t\of{x} \sim r <- t\of{x} : & \\
      \Psi \land x \notin t_X \land x \in t'_X => \Phi \land  r\inecmem{\ecmem_1} & = r\inecmem{\ecmem_2},
  \end{split}
\]
The labels associated with the value $t\of{x}$ can be used to apply the $\tactic{secrndasgn}$ tactic as explained in Section~\ref{sec:tactics}.
Following the rules for the tactic as illustrated in Figure~\ref{fig:tactics-prhl}, the goal splits into two proof obligations:
\begin{align}
  & |= \{ \} \sim \{ \} : \Psi' => \Psi' \land r\inecmem{\ecmem_1'} = r\inecmem{\ecmem_2} \label{eq:secrndasgn-g1} \\
  & |= t\of{x} <- v; r <- t\of{x} \sim r <- t\of{x} : \Psi' => \Phi' \label{eq:secrndasgn-g2}
\end{align}
where
\begin{align*}
  & \ecmem_1' = \ecmem_1 \mbox{ augmented with } \leakable{v}, \\
  & \Psi' = \Psi \land x \notin t_X \land x \in t'_X \land v\inecmem{\ecmem_1'} = t\of{x}\inecmem{\ecmem_2}, \mbox{ and} \\
  & \Phi' = \Phi \land r\inecmem{\ecmem_1'} = r\inecmem{\ecmem_2} \land t\of{x}\inecmem{\ecmem_1'} \inr \delta.
\end{align*}
The proof obligation~\ref{eq:secrndasgn-g1} holds, as the confidentiality of the value in $\leakable{v}$ is secret as the invariant postulates; this fact holds because the value borrowed from the map was sampled in the past, and it has never been used or revealed.
Finally, the validity of the proof obligation~\ref{eq:secrndasgn-g2} is trivial, as the return values now are simple assignments of the same values on both sides, and the value clearly must have been labeled as sampled from $\delta$.

And the proof is complete.

\section{Comparison with the eager approach}
\label{sec:comparison}

We have shown that our direct approach is a sound theoretical alternative to the eager-lazy approach.
In this section, we compare the proof in Section~\ref{sec:lazy-sampling-sample-usage} as we implemented it in \EasyCrypt with the corresponding proof done with the eager approach.
A practical implementation, in fact, allows us to compare other characteristics that relate to the length and complexity of the proofs in either approach.

To provide the fairest comparison as possible, it is reasonable to pick up an essential implementation of lazy sampling, to avoid any other irrelevant aspects that would spuriously affect the number of proofs and their length.
Not coincidentally, the constructions in Figure~\ref{fig:if-example} are structurally the same as those showcased for the eager approach as recently implemented in the \EasyCrypt official repository, in the file {\tt theories/crypto/PROM.ec}\footnote{See {\tt README2.md} for details in our code, freely available at \url{https://mega.nz/file/gp4XlLLI#q_d6mXUA8DH0ED1_qNnzTJuP8PT_hgrlJTjcymccSPI}.}.%
This case study can be seen as a role model of how to handle two lazy constructions where the problematic proof obligation appears, and it is the one used, for example, in the PRP/PRF Switching Lemma~\cite{barthe2010programming} (that was originally completed in CertiCrypt, predecessor of \EasyCrypt).

What we focus on here is the added simplicity of our theory, related to the problem of lazy sampling.
We could not reliably measure human effort, as the time required for learning either technique, producing a proof (for the first time), or other similar characteristics, as done by Zinzindohou{\'e} et al.~\cite{zinzindohoue2017hacl} for example.
The best we can do is to use crude metrics, i.e., the number and the length of proofs; to do that, we use lines of code, number of words, and non-space characters.

The programs in the eager approach are enriched with two functions, to (deterministically) set and remove elements to and from the map.
Even though they serve as a comprehensive ``history'' of a variable before its re-sampling, we decided to remove them to provide a fairer comparison.
We also stripped out all comments and other lines that might have been incorrectly counted.
The above resulted to a reduced number of source lines of code (SLoC), from $765$ to $432$.
Finaly, we do not take into account how many changes either approach required to be implemented into the \EasyCrypt core\footnote{We changed $37$ files, made $1493$ additions and $202$ deletions.}.
A summarized comparison is illustrated in Table~\ref{tbl:comparison}.
\begin{table}[htbp]
  \caption{Eager-lazy approach compared with our approach based on lazy-sampling labeling.}
  \label{tbl:comparison}
  \centering
  \begin{tabular}{rcccc}
                     &          &           & \bf Non-space & \bf Number of \\
    \em (whole code) & \bf SLoC & \bf Words & \bf Chars     & \bf Proofs \\
    \toprule
    \bf eager approach   & $432$    & $3021$   & $16535$   & $30$ \\
    \bf our approach     & $219$    & $1537$   & $8142$    &  $1$ \\
    reduced effort       & $(49\%)$ & $(49\%)$ & $(51\%)$  &  \\

    \bottomrule
    \\
                      &          &           & \bf Non-space & \bf Length of \\
    \em (proofs only) & \bf SLoC & \bf Words & \bf Chars     & \bf Proofs \\
    \toprule
    \bf eager approach   & $265$    & $1918$   & $10214$   & $75,53,\dots$ \\
    \bf our approach     & $165$    & $1333$   & $7204$    & $165$ \\
    reduced effort       & $(38\%)$ & $(31\%)$ & $(29\%)$  & \\
    \bottomrule
  \end{tabular}
\end{table}
A quick comparison shows a drastic reduction in the number of proofs (from 30 to just 1): this is expected as it is our main contribution.
It is also noticeable a significant reduction in the general effort to complete the proofs in terms of lines, words or characters (about $30\%$).
On the same trend, they {\em need} to create additional constructions (that are not proofs) but account for about another rough $20\%$ of the total effort.
Thus, our approach reaches the indistinguishability in about $50\%$ less effort.
However, the length of our single proof in our approach is more than double the length of the longest proof in the eager-lazy approach.
This is only apparently alarming and does not impact the complexity of our proof.
One reason is that they split the main proof into multiple shorter lemmas and combine them together at the end.
Furthermore, they use a richer, more compact syntax in proofs than us to help them complete the proofs in fewer steps.
We consider the complexity of our proof significantly lower for the following reasons:
(i) our proof is direct, we do not need to create intermediate constructions, prove their indistinguishability, then carry out the main proof;
(ii) the main proof of indistinguishability shows exactly the same structure in both approaches, see Section~\ref{sec:lazy-sampling-sample-usage}: while they decided to split their sub-goals into multiple lemmas, we limit to (visually) split them with comments inside a single proof; nonetheless, the longest sub-goal in our approach accounts for less lines ($64$) than the longest of the proofs in the eager-lazy approach ($75$); finally,
(iii) we do not need to handle interprocedural code motion, that required the implementation of more than a dozen complex (eager-)tactics~\cite{barthe2009formal}, and we do not need re-sampling that, if the re-sampled values have been used, need to be handled to avoid potential inconsistencies in cascade.

\section{Related work}
\label{sec:formal-methods-lazy-sampling}

The problem solved by our proof technique is the same as described in Bellare and Rogaway~\cite{bellare2004game} roughly 20 years ago.
Such technique found its first implementation in the formalization by Barthe et al.~\cite{barthe2009formal} in the \prhl{}, accompanied by an implementation in the tool CertiCrypt.
Nonetheless, some other works share similar objectives as ours: Barthe et al.~\cite{barthe2014probabilistic} prove equivalence with non-interference properties in pre and post conditions of proof obligations (rather than labelling, as we do), while Grimm et al.~\cite{grimm2018monadic} do it in the case when an adversary is static and security cannot be described with oracles.

In particular, Barthe et al.~\cite{barthe2014probabilistic} build on top of the ideas of Swamy et al.~\cite{swamy2011secure} and Nanevski et al.~\cite{nanevski2011verification} and extend refinement types to relational formulae.
In contrast with our approach, they express the non-interference properties in the post condition, rather than directly label the data with security flags, as commonly done in other theories, e.g., information flow analysis.
This choice cannot work in general with unpredictable orders of function calls typical of random oracle proofs, as we require.
They offer a classic implementation of a random oracle that lazily samples (and memoizes) the random function, using a mutable reference holding a table mapping hash queries.
To verify the equivalence properties, they introduce several invariants that must hold and that relate the tables stored by either of the two executions.
As hinted before, they could not automate general direct proofs this way, and their approach requires the {\em manual} insertion (and proof) of program-dependent lemmas.
Their approach, in CertiCrypt, was probably the basis for the eager-lazy implementation that is now available in \EasyCrypt{}, the successor of CertiCrypt.
Conversely, our strategy allows the cryptographer to mechanise and automate a direct proof.

On a different note, Grimm et al.~\cite{grimm2018monadic} provide logical relations for a state monad and use it to prove contextual equivalences.
They show perfect security of one-time-pad, that is very basic and certainly can be described with the \prhl{}.
The objective in common with ours is their ability to prove equivalence of (bijective) random sampling operations.
Interestingly, their proofs are {\em automatically} produced, while we have to do them manually.
However, their work cannot apply nor be extended to equivalences with function calls, as oracles require, that is our case study.

\section{Conclusion and future development}
\label{sec:conclusion}
In this paper, we equip the cryptographer with a novel proof technique for lazy sampling in the random oracle model, that is used to prove the indistinguishability of programs with random samplings, generally referred to as lazy samplings.
We formalized our technique in the \prhl and propose it as an alternative to the already existing eager-lazy approach, originally introduced by Bellare and Rogaway~\cite{bellare2004game}.
We proved the soundness of our theory and implemented it in a theorem prover, \EasyCrypt, that allows for reasoning in the \prhl{}.
Finally, we compared the implementation in the two approaches and show that the proof done with our approach is noticeably simpler and shorter.
However, we remark that different case studies are characterized by different challenges; so, it is ultimately left to the cryptographer the choice of which proof technique would be more convenient for their case and their expertise.

However, this work leaves a couple of unanswered questions.
We compared our proof with the official example showing the eager approach in \EasyCrypt{}, but we do not (i) apply our approach to mechanized proofs other than that, nor (ii) investigate the implementation of our theory in other theorem provers.
To the best of our knowledge, only the Foundational Cryptography Framework (FCF)~\cite{petcher2015foundational}, \EasyCrypt{}, and only recently SSProve~\cite{abate2021ssprove}, implement the \prhl for imperative code; other theorem provers, e.g., CryptHOL~\cite{basin2020crypthol}, deal with probabilistic functions instead.
Even if we have chosen \EasyCrypt, we do not see impediments for FCF or SSProve to implement our strategy as well, and it might be an interesting future work.
Conversely, we expect that different proofs would be required in CryptHOL, as non-deterministic calls to functions are not necessarily captured or dealt with in the same way as in the \prhl{} for imperative programs.

\section*{Acknowledgements} 
\noindent
The authors would like to express their gratitude to Benjamin Grégoire and Gilles Barthe for valuable discussions and generous assistance that significantly contributed to the development of this work.
 
\bibliographystyle{plain}
\bibliography{references}

\appendix

\subsection{Additional preliminaries: negligible functions.}
\label{sec:negligible-functions}

Computational security relaxes standard security properties definitions by letting them fail at most for a negligible probability.
Negligible functions are the necessary formality to capture such a concept.
The family of {\em negligible functions} describes those functions that decrease faster than the inverse to a polynomial.

\begin{definition}[Negligible function]
\label{def:negligible-function}
A function $\negligible : \natset -> \realset$ is {\em negligible} if, for all positive natural $c$, there exists a natural number $n_0$ such that
\[
  \forall n \in \natset,\; n_0 < n => \abs{\negligible[n]} < \dfrac{1}{n^c}.
\]
\end{definition}

Informally, if $n$ is a security parameter, and a security property holds up to negligible probability on $n$, then we can {\em tune} the strength of a security property by increasing or decreasing $n$.
Increasing the security parameter $n$ increases both the time required by honest parties to run the protocol and the breaking algorithm of the attackers.
The key concept is that while the former increases polynomially (as it is an efficient algorithm), the latter increases super-polynomially, e.g. exponentially.

\subsection{Additional preliminaries: probability ensembles and indistinguishability.}
\label{sec:probability-ensembles}
An indistinguishability experiment must capture the security of the protocol for the security parameter $n$ tending to infinite.
The necessary formalism to have this asymptotic approach is given by an infinite sequence of random bitstrings, called {\em probabilistic ensembles}, that can be efficiently sampled.
The following formal notation is adapted from of Hazay and Lindell~\cite{hazay2010efficient}, Goldreich~\cite{goldreich2007foundations}, and Katz and Lindell~\cite{katz2014introduction}. %
\begin{notation}[Probability ensemble]
\label{def:distribution-ensemble}
Given an index value $a \in \zoset^\kstar$, and a security parameter $n \in \natset$, we denote a {\em probability ensemble} $\alg{X}$ indexed by $a$ and $n$ the set of all random variables $X_{a,n}$ for all $a$ and all $n$:
\[
  \alg{X} = \set{X_{a,n}}_{a \in \zoset^\kstar, n \in \natset}.
\]
Additionally, there exists an efficient sampling algorithm $\alg{S}$ such that for all $a$ and $n$, $\alg{S}\of{1^n, a}$ and $X_{a,n}$ are identically distributed.
\end{notation}
The input of the parties in a protocol or in a cryptographic experiment is described by the value $a$ indexing a probability ensemble $\alg{X}$~\cite{hazay2010efficient}, while the description of the output is included in the random variable $X_{a,n} \in \alg{X}$.
The random variable $X_{a,n} \in \alg{X}$ is what the adversary will finally inspect.

We are interested in the indistinguishability between two different constructions $M_0$ and $M_1$ of the same protocol, i.e. they provide the same functionality.
Their output can be seen as two random variables $X_{a,n}$ and $Y_{a, n}$ in the ensembles $\alg{X}$ and $\alg{Y}$ respectively to the games.
The result of the adversarial strategy, or any other event capturing security, can be seen as the event $E$ as a function over one such variable.
So, we can relate $M_0$ and $M_1$ through the probability of events $E_0$ and $E_1$, whose (absolute) difference is at the basis of the concepts of indistinguishability:
\[
  \left| \prob{E_X\of{X_{a,n}}} - \prob{E_Y\of{Y_{a,n}}} \right|,
\]
where $X_{a,n}$ and $Y_{a,n}$ are random variable in the ensembles $\alg{X}$ and $\alg{Y}$.
The event we generally refer to in cryptographic lemmas relate to the output of an adversary to be able to determine a predicate over the {\em challenging} string.
Finally the {\em event} is the experiment outputting $1$ whenever the adversary guesses correctly.
So we write $\prob{\adversary\of{X_{a,n}} = y}$ for the probability of the event when the adversary is provided with $X_{a,n}$ and outputs $y$.
We equivalently write $\prob{\adversary\of{M} = y}$ for $\prob{\adversary\of{X_{a,n}} = y}$ if $X_{a,n}$ relates to the construction $M$.

The ability of the adversary, or distinguisher, to guess is measured by her probability to win a distinguishing game, which is called the {\em advantage} of the adversary.
Depending on the security property and the adversarial capabilities, the adversary can be challenged to distinguish two constructions either (i) by inspecting the transcript generated by a run of either $M_0$ or $M_1$ randomly selected, or (ii) by interrogating an {\em oracle} (randomly) programmed with either $M_0$ or $M_1$ and then inspecting the related output.
We focus on the latter.
Informally, an oracle is an entity that provides {\em black-box} access to functionalities.
So in our distinguishing game, the adversary chooses inputs to the functionality exposed by the oracle, which can be called by the adversary and always provides the corresponding output.
Definitions involving oracles can be actually captured by providing the adversary with the {\em full} description of the functions, as an input, along with the transcripts of the protocol.
However, they have been introduced in cryptography when a such full description of the functions may have super-polynomial length (e.g. a random function~\cite{katz2014introduction}), and the (polynomial) adversary would not even have enough time to read its input.

Importantly, computational indistinguishability allows the adversary to win the distinguishing experiment, illustrated in Figure~\ref{fig:oracle-exp-dist}, by at most a negligible probability.
However, in this paper we show a case of perfect indistinguishability, i.e. the special case where the negligible probability is zero.
As such, the adversary is described by a PPT algorithm, and definitions must hold for any such adversary.
The same complexity restriction is reflected to oracle definitions, where the adversary is limited to interrogating the oracle at most a polynomial number of times.

\subsection{Eager-lazy approach}
\label{sec:eager-lazy-approach}
The eager-lazy approach was formally introduced by Bellare and Rogaway~\cite{bellare2004game} as a sampling technique to reason about lazy constructions by relating them to their corresponding eager constructions where the random variables are all sampled upfront.
They used their technique to prove indistinguishability in cryptographic games~\cite{bellare2006security}.
To appreciate the non-triviality of such an approach, we discuss the core lemma that allows one to substitute an already sampled value with a resampled one.
The following lemma is by Barthe et al.~\cite{barthe2009formal} with the notation we used in Section~\ref{sec:lazy-sampling-reasoning-pRHL}, as introducing their exact notation would unnecessarily require additional sections\footnote{Our notation is almost the same, we needed to reserve some letters as we are merging with lazy-sampling labeling notation.}.
\begin{lemma}[Lazy/eager sampling]
\label{th:eager-lazy-approach}
  Let $C[\bullet]$ be a context, $\ecmem_1$ and $\ecmem_2$ the memories in the leftmost and rightmost programs in a \prhl judgment, $c_1$ and $c_2$ commands, $e$ a boolean expression, $\delta$ a distribution expression, and $v$ a variable such that $C[\bullet]$ does not modify $FV\of{e} \cup FV\of{\delta}$\footnote{$FV$ are the free variables in an expression.} and does not use $v$.
  Assume
  \[
    \begin{split}
    |= &  \begin{array}{rcl}
            v \getr \delta; c_1; \mathbf{if}\ e\ \mathbf{then}\ v \getr \delta & \sim & v \getr \delta; c_1 \\
          \end{array} \\
    :\ & \Psi \land e => e \land =_v, \\
    \end{split}
  \]
  where $\Psi$ states that the values used in $c_1$ and $c_2$ are equivalent, and
  \[
    |=  c_2 \sim c_2:\ \Psi \land \lnot e => \lnot e \land =_v. \\
  \]
  Let $c$ and $c'$ be
  \[
    \begin{split}
       c & = \mathbf{if}\ e\ \mathbf{then}\ \cbr{v \getr \delta; c_1}\ \mathbf{else}\ c_2 \\
      c' & = \mathbf{if}\ e\ \mathbf{then}\ c_1\ \mathbf{else}\ c_2
    \end{split}
  \]
  then
  \[
    \begin{split}
    |= &  \begin{array}{rcl}
            C[c]; \mathbf{if}\ e\ \mathbf{then}\ v \getr \delta & \sim & v \getr \delta; C[c'] \\
          \end{array} \\
    :\ & \Psi \land e =>\ =_v. \\
    \end{split}
  \]
\end{lemma}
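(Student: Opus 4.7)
The plan is to prove this lemma by structural induction on the context $C[\bullet]$, with a case analysis on the guard $e$ where appropriate. The crucial observation is that, since $C$ does not use $v$ and does not modify $FV(e) \cup FV(\delta)$, the eager sample $v \getr \delta$ on the right-hand side can be freely commuted with any sub-command of $C$ via the standard $\prhl$ commutation rule for samplings, bringing it adjacent to the hole; this reduces every inductive step to a strictly smaller context, with the base case handled by the two hypotheses.

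For the base case $C[\bullet] = \bullet$, unfolding $c$ and $c'$ the goal becomes
\[
  |= \pwhileif{e}{\cbr{v \getr \delta; c_1}}{c_2}; \pwhileif{e}{v \getr \delta}{\pwhileskip} \sim v \getr \delta; \pwhileif{e}{c_1}{c_2} : \Psi \land e =>\ =_v.
\]
Since the empty context trivially preserves $FV(e)$, both sides evaluate $e$ identically at entry. Splitting on this value, the precondition gives $e$, so the outer conditionals take their true branches and the claim reduces to
\[
  |= v \getr \delta; c_1; \pwhileif{e}{v \getr \delta}{\pwhileskip} \sim v \getr \delta; c_1 : \Psi \land e =>\ =_v,
\]
which is Hypothesis~1 after weakening its postcondition from $e \land =_v$ to $=_v$. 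Hypothesis~2 is the dual ingredient used in the inductive cases, when a sub-execution of $C[c]$ encounters the $\lnot e$ branch of the inner conditional (the $c_2$ branch), where it supplies that $c_2$ alone, without any sampling, still preserves $=_v$.

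For the inductive step on a sequencing context $C[\bullet] = c_0; C_1[\bullet]$ with $c_0$ $\bullet$-free, I first commute $v \getr \delta$ on the right past $c_0$ (valid since $c_0$ does not read or write $v$ and does not modify $FV(\delta)$), and then apply the induction hypothesis to $C_1$; the mirror cases $C_1[\bullet]; c_0$ and the conditional contexts $\pwhileif{e'}{C_1[\bullet]}{c_0}$ (and its symmetric variant) are treated analogously, invoking the fact that the evaluation of $e'$ agrees on both sides because $FV(e')$ is untouched by any preceding prefix. The main obstacle is the while-loop case $C[\bullet] = \pwhileloop{e'}{C_1[\bullet]}$: the body executes an unknown number of times and each iteration of the left program produces a \emph{fresh} lazy resampling, which must nevertheless be reconciled with the \emph{single} upfront sample of the right program. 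The standard remedy is the $\prhl$ \tactic{while} rule applied with an invariant stating that, at every loop iteration, the current value of $v$ on the left coincides with the eagerly-sampled $v$ on the right and that $\Psi$ is maintained; preservation of this invariant across one iteration requires Hypothesis~1 on the $e$ branch of the inner $c$ and Hypothesis~2 on the $\lnot e$ branch, with the concluding outer $\pwhileif{e}{v \getr \delta}{\pwhileskip}$ re-establishing $=_v$ at loop exit. The subtlety lies in picking this invariant so that it is simultaneously strong enough to give $=_v$ at exit and weak enough to be maintained under both branches — exactly the reason both hypotheses are taken as assumptions.
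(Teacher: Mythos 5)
First, a point of comparison: the paper itself does not prove this lemma. It is imported from Barthe et al.~\cite{barthe2009formal} as background on the competing eager--lazy approach, and the appendix only reproduces their informal discussion of what the hypotheses mean, explicitly noting that the authors ``purposely omit further details''. So there is no in-paper proof to measure your attempt against; I can only assess it on its own terms.

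Your skeleton --- structural induction on $C[\bullet]$, base case discharged by Hypothesis~1 after resolving the conditionals under the precondition $e$, Hypothesis~2 covering the $\lnot e$ branch --- is the right shape and matches what the CertiCrypt development actually does. The genuine gap is the while-loop case, and it is not a detail: it is the entire reason the eager approach needs its own family of tactics. The invariant you propose, that at every iteration the current value of $v$ on the left coincides with the eagerly sampled $v$ on the right, is false at every intermediate iteration in which $e$ still holds: there the left program has not yet performed the sampling that is supposed to be coupled with the right program's single upfront sample, so $=_v$ cannot hold as a relation on memories --- it only comes into existence at the moment the left program actually resamples, which may be inside a later iteration or in the trailing $\pwhileif{e}{v \getr \delta}{\pwhileskip}$. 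What you would need instead is ``the right program's $v$ is still distributed as $\delta$ and independent of everything observed so far'', which is not expressible as a memory-level relational invariant; this is precisely the obstruction that the present paper describes and that the standard $\tactic{while}$, $\tactic{rnd}$ and $\tactic{swap}$ rules cannot overcome. The correct inductive statement is the guarded swap equivalence $C_1[c]; S \sim S; C_1[c']$ with $S = \pwhileif{e}{v \getr \delta}{\pwhileskip}$ on \emph{both} sides (your unguarded right-hand $v \getr \delta$ also breaks the induction hypothesis once $e$ can be false at the entry of a subcontext), and for loops it must be proved by induction on the number of iterations, telescoping one-iteration swaps --- i.e., the interprocedural code-motion machinery of the eager tactics, not the standard while rule with a relational invariant. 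Relatedly, your claim that the right-hand sample ``can be freely commuted with any sub-command of $C$ via the standard commutation rule'' is exactly what fails when the sub-command contains the hole or an abstract call, which is the situation the lemma is designed for.
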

Where contexts can be seen as efficient algorithms that deduce an expression from another.
If we refer to the introduction, we see that a second artificial sample operation is inserted in one of the constructions: in other words, this allows to prove that $P_1 \compeq \eager_1$.
We remark that the above programs show only the interesting statements which appear in the crucial proof obligation under study.
They apply in general, for any two programs that include those lines to be compared, regardless of what are their next and previous statements in either program.
The intuitive discussion that they give is the following.
\begin{quotation}
\small
``
  In the above lemma $e$ indicates whether $v$ has not been used in the game since it was last sampled.
  If it has not been used, then it is perfectly fine to resample it.
  The first two hypotheses ensure that $e$ has exactly this meaning, $c_1$ must set it to {\tt false} if it has used the value sampled in $v$, and $c_2$ must not reset $e$ if it is {\tt false}.
  The first hypothesis is the one that allows swapping $c_1$ with $v \getr \delta$, provided the value of $v$ is not used in $c_1$.
  Note that, for clarity, we have omitted environments in the above lemma, and so the second hypothesis is not as trivial as it may seem because both programs may have different environments.
''~\cite{barthe2009formal}
\end{quotation}
We stress that the context $C[\bullet]$ and the environments in the above lemma are generally complex in the task of proving the indistinguishability between a lazy construction and its corresponding eager construction, and they purposely omit further details.
In a nutshell, the complexity of the eager-lazy approach is in the need to introduce dozens of extra eager-tactics to allow interprocedural code motion, overcoming the limitations of the tactic $\tactic{swap}$.
Proofs would become even more intricate in the case of directly proving indistinguishability between two lazy constructions like $P_1$ and $P_2$.
This extra complexity is usually tamed by creating additional intermediate games~\cite{shoup2004sequences}.
We solve the same problem by allowing the cryptographer complete proofs with oracles without the need for interprocedural code motion.

\subsection{Some EasyCrypt code}
\label{sec:ec-code}

This section shows some \EasyCrypt code that in the paper is shown in pseudocode, or mathematical formulas.

\subsubsection{Labeled types}
the labeled types, predicates, and operations illustrated in Section~\ref{sec:ifsupport} are implemented with the code listed in Fig.~\ref{fig:ec-if}.
\begin{figure}[htbp]
  \caption{Implementation of the labeled types, predicates, and operations}
  \label{fig:ec-if}
  \begin{lstlisting}[language=easycrypt,basicstyle=\normalsize\ttfamily,print]
type confidentiality = [ SECRET | LEAKED ].
type 'a leakable
  = 'a * ('a distr) option * confidentiality.
op is_secret ['a] (v: 'a leakable)
  = SECRET = v.`3.
op is_leaked ['a] (v: 'a leakable)
  = !(is_secret v).
op sampled_from ['a]
  (d: 'a distr) (v: 'a leakable)
  = v.`2 = Some d.
op ovd_eq ['a] (v w: ('a leakable) option)
  = ((oget v).`1, (oget v).`2)
  = ((oget w).`1, (oget w).`2).
abbrev (===) ['a]
  (v w: ('a leakable) option) = ovd_eq v w.
abbrev (<=)
  ['a] v d = sampled_from d v.
  \end{lstlisting}
\end{figure}

\subsubsection{The two lazy constructions}
\label{sec:lazy-constructions-ec}
the pseudo-code in Fig.~\ref{fig:if-example} reflects the actual \EasyCrypt code in Fig.~\ref{fig:ec-if-example}.
\begin{figure}[htbp]
  \caption{Implementation in \EasyCrypt of the two lazy constructions $P_1$ and $P_2$ using lazy-sampling labeling, compare with Figure~\ref{fig:if-example}.}
  \label{fig:ec-if-example}
    \mathligsoff
    \begin{lstlisting}[language=easycrypt,basicstyle=\normalsize\ttfamily,print]
module P1: RF = {
  var t: (X, Y leakable) fmap
  proc init() = { t = empty; }
  proc f(x: X): Y = {
    var ret: Y;
    if (!(dom t x)) {
      t.[x] </$ dv;
    }
    ret </ oget t.[x]; return ret;
  }

  proc g(x: X) = { (* do nothing *) }
}.
    \end{lstlisting}
    \begin{lstlisting}[language=easycrypt,basicstyle=\normalsize\ttfamily,print]
module P2: RF = {
  var t: (X, Y leakable) fmap
  proc init() = { t = empty; }
  proc f(x: X): Y = {
    var ret: Y;
    if (!(dom t x)) {
      t.[x] </$ dv;
    }
    ret </ oget t.[x]; return ret;
  }

  proc g(x: X) = {
    if (!(dom t x)) {
      t.[x] </$ dv;
    }
  }
}.
    \end{lstlisting}
\end{figure}

\subsection{Indistinguishability through the experiment.}
\label{sec:theorem-experiment}
Even though the Theorem~\ref{th:sample-indistinguishability} correctly captures the indistinguishability of two constructions $P_1$ and $P_2$, its link with the cryptographic experiment illustrated in Figure~\ref{fig:oracle-exp-dist} might be non-intuitive.
We show the steps to reach the conclusion in the theorem by directly referring to the experiment $\experiment{Exp}^{\dist, \oracle}_{P_1,P_2}$ and its pseudo-code shown in Figure~\ref{fig:oracle-exp-dist-pseudocode}.

Firstly, we start showing an implementation of the experiment  in \EasyCrypt is as follows:
\begin{lstlisting}[language=easycrypt,basicstyle=\normalsize\ttfamily,print]
module Exp(P1: RF, P2: RF, D: Dist) = {
  proc main() = {
    var b, b';

    P1.init();
    P2.init();
    b <$ {0,1};
    if (b) {
      b' <@ D(P1).distinguish();
    } else {
      b' <@ D(P2).distinguish();
    }

    return b = b';
  }
}.
\end{lstlisting}
The experiment is won if it returns $1$, that is the distinguisher guessed $b$ correctly as $b' = b$.
Following the experiment, if the two constructions $P_1$ and $P_2$ are indistinguishable, then the distinguisher cannot win the experiment with a probability {\em significantly} higher than a coin toss.
Thus in general, we would expect a security theorem whose conclusion requires to prove that:
\[
  \prob{\experiment{Exp}^{\dist, \oracle}_{P_1,P_2} = 1} = \dfrac{1}{2} \pm \negligible[n].
\]

In the particular case of $P_1$ and $P_2$ as described in Figure~\ref{fig:if-example}, we can prove perfect indistinguishability, that is:
\[
  \prob{\experiment{Exp}^{\dist, \oracle}_{P_0,P_1} = 1} = \dfrac{1}{2}.
\]

It is easy to see that {\em unfolding} the definition of the experiment itself, we would reduce to prove that
\begin{itemize}
  \item $1 = \dist\of{\oracle_1}.\mathsf{dist}()$, in the case $b = 1$, and
  \item $0 = \dist\of{\oracle_2}.\mathsf{dist}()$, $b = 0$.
\end{itemize}
They are independent cases with probability of success equivalent to $\frac{1}{2}$, that multiplied with the probability of $b = 0$ or $b = 1$, that is $\frac{1}{2}$, they will eventually sum up to $\frac{1}{2}$ as expected.
In this particular case, it is immediate to see that the probability of guessing correctly is equivalent to the probability of guessing incorrectly, as they are complementary events:
\[
  \prob{\dist\of{\oracle_2} = 0} = \prob{\dist\of{\oracle_2} = 1} = \frac{1}{2}.
\]
In other words, we can reduce the perfect indistinguishability to the equivalence of the two probabilities (i) of guessing correctly in the case that the oracle $\oracle$ is programmed with $P_1$, and (ii) of guessing incorrectly in the case that the oracle $\oracle$ is programmed with $P_2$.
Formally,
\[
  \prob{\dist\of{\oracle_1} = 1} = \prob{\dist\of{\oracle_2} = 0},
\]
which is exactly the conclusion of the Theorem~\ref{th:sample-indistinguishability}.
We also note that the initialization procedure $init$, that sets the internal map as an empty map, is called by the experiment before the distinguisher.
This justifies the assumption of having empty maps in the Theorem~\ref{th:sample-indistinguishability}.

\subsection{Reproducing the problem from a sound situation}
\label{sec:reproducing-virtual-swap-rnd}
The problem we solve in the paper can be identified by the following judgment
\[
  |=
  \pwhileprog{v \getr \delta; r <- v}
  \sim
  \pwhileprog{r <- v}
  :\ =_x => =_r
\]
that is generally not true.
We show how to reproduce it by starting from a similar situation that is, conversely, provable using already existing and well supported tactics in \EasyCrypt{}: $\tactic{rnd}$ and $\tactic{swap}$.
In particular,
\[
  |= \pwhileprog{t\of{x} \rnd{\secasgn} \delta; r \secasgn t\of{x}} \sim \pwhileprog{c; t\of{x} \rnd{\secasgn} \delta; c'; r \secasgn t\of{x}}: =_x => =_r.
\]
where $c$ and $c'$ do not modify the value of $x$ nor access (read or write) $t\of{x}$ even indirectly: for the purpose of the post-condition, $c$ and $c'$ represent {\em dead code}.
This can be transformed using the tactic $\tactic{swap}$ to move the sampling just above the assignment, and discard the dead code\footnote{This uses another tactic, $\tactic{kill}$, but we did not want to overload the narrative.}.
We get two identical programs
\[
  |= \pwhileprog{t\of{x} \rnd{\secasgn} \delta; r \secasgn t\of{x}} \sim \pwhileprog{t\of{x} \rnd{\secasgn} \delta; r \secasgn t\of{x}}:\ =_x => =_r.
\]
At this point, it is clear that upon the same inputs ($=_x$) we will get the same outputs ($=_r$) and, therefore, the same distribution of their memories.
As we focus on reproducing our situation of interest, we can continue by de-sugaring the new syntax required to handle labeled values by using the tactic $\tactic{declassify}$, then the samplings by $\tactic{secrnd}$.
\[
  |=
  \pwhileprog{v \getr \delta; t\of{x} <- \triplet{v}{\delta}{\secret}; t\of{x} <- \triplet{v}{\delta}{\leaked}; r <- v}
  \sim
  \pwhileprog{v \getr \delta; t\of{x} <- \triplet{v}{\delta}{\secret}; t\of{x} <- \triplet{v}{\delta}{\leaked}; r <- v}
  :\ =_x => =_r
\]
The updates to the memory $m$ are irrelevant for the post-condition, as they only affect the {\em confidentiality} dimension; so, removing the deadcode, we get
\[
  |=
  \pwhileprog{v \getr \delta; r <- v}
  \sim
  \pwhileprog{v \getr \delta; r <- v}
  :\ =_x => =_r
\]
At this point, we can consume the first sampling with the tactic $\tactic{rnd}$ sided to the rightmost program only\footnote{Again, we actually need to apply another tactic first, $\tactic{seq}$, as $\tactic{rnd}$ is limited to consume the bottom statements in programs.}, and finally we get
\begin{equation}
  \label{eq:lemma-conclusion-consumed}
  |=
  \pwhileprog{v \getr \delta; r <- v}
  \sim
  \pwhileprog{r <- v}
  :\ =_x => =_r
\end{equation}
According to its semantics, see Fig.~\ref{fig:pwhile-semantics}, we know that the rightmost memory has been updated with
\begin{equation}
  \label{eq:lemma-premise}
  \bind \of{\pwhileeval{\delta} \ecmem_2} \of{\lambda f. \return \ecmem_2\rwrule{f}{v}},
\end{equation}
even if such a relation formally disappeared from Eq.~\ref{eq:lemma-conclusion-consumed}.
This perfectly reproduces the situation we are interested in.

We also notice that up to this point, we could fully mechanize the whole reasoning in \EasyCrypt{}.
Another way to see it is that the choice of tactics in the last step, even if possible, does not represent a viable strategy to finish the proof.
This is because tactics can (soundly) weaken preconditions with the risk of not leaving enough hypotheses to complete the proof -- and incomplete proofs do not jeopardize the soundness of theorem provers, e.g., $P |- Q$ clearly entails $P => Q, P |- Q$ ({\em modus ponens}), but the former is not generally true.
We notice that we could very well complete our nested proof in \EasyCrypt if we had used the $\tactic{rnd}$ to contemporarily consume the sampling in both programs; however, this would not have helped our proof.

\subsection{Formal proofs}
\label{sec:formal-proofs}

\begin{lemma}[Virtual $\tactic{swap/rnd}$]
  Let $\ecmem_1$ and $\ecmem_2$ be the memories in the leftmost and rightmost programs in a \prhl judgment, $\delta$ a (lossless) distribution expression, and $t$ a map and equally defined in both memories.
  Assume that
  \[
    \pwhileeval{t\of{x}} \ecmem_2 = \bind \of{\pwhileeval{\delta} \ecmem_2} \of{\lambda f. \return \ecmem_2\rwrule{f}{t\of{x}}}.
  \]
  then we have
  \[
    |= t\of{x} \getr \delta; r <- t\of{x}  \sim  r <- t\of{x}: =_x => =_r.
  \]
\end{lemma}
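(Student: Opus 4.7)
The plan is to unfold the denotational semantics on both sides of the \prhl judgment and use the hypothesis to synthesize a ``virtual'' sampling step on the right-hand side so that both programs end up $\delta$-sampling into the same position; the diagonal coupling then witnesses the lifted post-condition $=_r$. First, I would fix arbitrary $\ecmem_1, \ecmem_2 \in \ecmemfamily$ with $x\inecmem{\ecmem_1} = x\inecmem{\ecmem_2}$ and expand the left-hand distribution using Fig.~\ref{fig:pwhile-semantics}:
\[
  \pwhileeval{t\of{x} \getr \delta; r <- t\of{x}} \ecmem_1
  = \bind \of{\pwhileeval{\delta}\ecmem_1}\of{\lambda f.\return \ecmem_1\rwrule{f}{t\of{x}}\rwrule{f}{r}},
\]
and similarly the right-hand program reduces to the deterministic update $\return \ecmem_2\rwrule{\pwhileeval{t\of{x}}\ecmem_2}{r}$.

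Next, I would apply the hypothesis
$\pwhileeval{t\of{x}} \ecmem_2 = \bind\of{\pwhileeval{\delta} \ecmem_2}\of{\lambda f.\return \ecmem_2\rwrule{f}{t\of{x}}}$
to rewrite the right-hand side so that it too has the shape of a $\delta$-sample followed by an assignment to $r$:
\[
  \pwhileeval{r <- t\of{x}} \ecmem_2
  = \bind\of{\pwhileeval{\delta}\ecmem_2}\of{\lambda f.\return \ecmem_2\rwrule{f}{t\of{x}}\rwrule{f}{r}}.
\]
This is the ``virtual swap'' step: the hypothesis lets me pretend that the right-hand program also sampled $t\of{x}$ from $\delta$, even though syntactically it did not. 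It is at this point that losslessness of $\delta$ matters, because otherwise the right-hand Dirac-like distribution could not be re-expressed as a $\bind$ over $\delta$ without introducing mass loss.

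With both sides rewritten, I would construct the lifting $\lift[=_r]$ by the diagonal coupling: sample $f \sim \delta$ once, and use $f$ as the value written to $r$ in both memories. Under this coupling the resulting marginals are exactly the two distributions computed above, so the coupling is valid, and by construction $r\inecmem{\cdot}$ is equal to $f$ on both sides with probability one, discharging $=_r$. The main obstacle is the second step, namely the rigorous use of the hypothesis to re-express the right-hand distribution as a bind over $\delta$; this rewrite is what formally justifies the ``virtual'' part of the tactic's name, and it is precisely where the proof would otherwise need an actual $\tactic{swap}$ across a piece of program flow that has already disappeared by this point in the proof (cf.\ Fig.~\ref{fig:proof-intuition}). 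Once this rewrite is in place, the diagonal coupling and the fact that neither program modifies $x$ make the remainder of the argument routine.
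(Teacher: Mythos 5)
Your proof is correct and follows essentially the same route as the paper's: both use the hypothesis to recast the right-hand $t\of{x}$ as a virtual $\delta$-sample and then match the two samplings to discharge $=_r$. The paper packages this as an $\tactic{assign}$ step followed by a one-sided $\tactic{rnd}$ (which is where it invokes losslessness) with the memory binding tracked ``on paper,'' whereas you work directly with the denotational semantics and an explicit diagonal coupling, but the underlying argument is the same.
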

\begin{proof}
The conclusion of our statement is not generally true.
It is false, for example, in the case where, in the rightmost program, the value of $t\of{x}$ is deterministic.
However, our hypothesis excludes such cases and we can complete the proof outside of \EasyCrypt, as we need to directly reason about the internal structure of the memory.
To proceed with our proof, we can still use \EasyCrypt's tactics, but need to annotate the related memory updates aside on a piece of paper.

We can start transforming the conclusion of the lemma by applying the tactic $\tactic{assign}$, see Fig.~\ref{fig:ec-tactics}.
Its effect is to replace the equality of $r$ in the post-condition directly with $t\of{x}$, as the latter will be (deterministically) associated to $r$ in the memory: we obtain
\[
  |=
  \pwhileprog{t\of{x} \getr \delta}
  \sim
  \pwhileprog{}
  :\ =_x => =_{t\of{x}}.
\]
We then can apply (only) the tactic $\tactic{rnd}$ sided to the leftmost program (see Fig.~\ref{fig:ec-tactics}) to discharge probabilistic assignments.
Speaking about the distribution over memory, it translates to require that the distribution $\delta$ is lossless (it is by hypothesis).%
At the same time, reflecting the semantics of the probabilistic assignment in Fig.~\ref{fig:pwhile-semantics}, we annotate
\[
  \bind \of{\pwhileeval{\delta} \ecmem_1} \of{\lambda f. \return \ecmem_1\rwrule{f}{t\of{x}}}.
\]
As prefigured, we have no way to complete such a proof in \EasyCrypt, as the relations over the distribution over memories (related to $t\of{x}$) disappeared.
Yet, with the aid of our annotation on paper, we know that values of $t\of{x}$ in the memories $\ecmem_1$ and $\ecmem_2$ match, as per our hypothesis, and the proof is complete.
\end{proof}

\begin{lemma}[Virtual $\tactic{secrndasgn}$]
  Let $\ecmem_1$ and $\ecmem_2$ be the memories in the leftmost and rightmost programs in a \prhl judgment, $\delta$ a distribution expression, and $t$ a map {\em with labeled codomain} and equally defined in both memories.
  Assume that
  \[
    \pwhileeval{t\of{x}} \ecmem_2 = \bind \of{\pwhileeval{\delta} \ecmem_2} \of{\lambda f. \return \ecmem_2\rwrule{\triplet{f}{\delta}{H}}{t\of{x}}}.
  \]
  then we have
  \[
    |= t\of{x} \rnd{\secasgn} \delta; r \secasgn t\of{x}  \sim  r \secasgn t\of{x}: =_x => =_r.
  \]
\end{lemma}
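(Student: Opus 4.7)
The plan is to reduce this labeled version to the already-established Lemma~\ref{th:virtual-swap-rnd} by unfolding the new syntactic sugar of $\pifWhile$ into plain $\pWhile$ assignments, after which the argument collapses onto the value projection $\proj{1}{t\of{x}}$.

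First, I would apply the rule $\tactic{declassify}$ (Fig.~\ref{fig:tactics-hl_phl}) simultaneously to $r \secasgn t\of{x}$ on both sides of the judgment. By the rule, each such statement rewrites to the pair $t\of{x} <- \triplet{v}{\delta_v}{\leaked};\ r <- v$ where $v = \proj{1}{t\of{x}}$. Next, on the left program, I would apply $\tactic{secrnd}$ to $t\of{x} \rnd{\secasgn} \delta$, which rewrites it to $v \getr \delta;\ t\of{x} <- \triplet{v}{\delta}{\secret}$ for a fresh $v \notin FV(\ecmem_1)$. At this point both programs contain, in addition to the operations on values, only label-rewriting assignments to $t\of{x}$. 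These assignments touch the second and third components of the leakable triple only; since the post-condition $=_r$ refers solely to $r$, whose value comes from $\proj{1}{t\of{x}}$, the label updates are dead code with respect to the goal and can be discarded (equivalently, absorbed into the semantics of the remaining statements).

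After this de-sugaring, the goal reduces to the plain $\pWhile$ judgment
\[
|=\ \pwhileprog{t\of{x} \getr \delta;\ r <- t\of{x}} \sim \pwhileprog{r <- t\of{x}}: =_x => =_r,
\]
where now $t\of{x}$ denotes the value component only. The hypothesis of the current lemma, which states that $\pwhileeval{t\of{x}} \ecmem_2$ is the bind of $\pwhileeval{\delta} \ecmem_2$ with an update of $t\of{x}$ by $\triplet{f}{\delta}{H}$, projects to exactly the premise required by Lemma~\ref{th:virtual-swap-rnd}: the first projection of $t\of{x}$ in $\ecmem_2$ is distributed as a sample from $\delta$. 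Invoking Lemma~\ref{th:virtual-swap-rnd} closes the goal.

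The main obstacle is justifying that the label-only updates introduced by $\tactic{declassify}$ and $\tactic{secrnd}$ can be safely ignored with respect to the post-condition. This requires arguing that, although the two programs perform syntactically different label rewrites on $t\of{x}$, the distribution on memories induced by the remaining value-level statements is the one analyzed in Lemma~\ref{th:virtual-swap-rnd}, and that the freshness of $v$ produced by $\tactic{secrnd}$ preserves the precondition $=_x$. Once this bookkeeping is carried out, the heart of the argument is precisely Lemma~\ref{th:virtual-swap-rnd} applied to the value projection.
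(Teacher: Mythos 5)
Your proposal is correct and follows essentially the same route as the paper's own proof: de-sugar the new syntax with $\tactic{declassify}$ and $\tactic{secrnd}$, discard the label-only updates as dead code with respect to $=_r$, and project the hypothesis onto the value component so that Lemma~\ref{th:virtual-swap-rnd} applies. The only cosmetic difference is that the paper explicitly introduces the identifier $v \eqdef \proj{1}{t\of{x}}$ in the rightmost memory before stripping dead code, but the resulting reduced judgment and the appeal to Lemma~\ref{th:virtual-swap-rnd} are the same.
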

\begin{proof}
We start our proof de-sugaring the new syntax by using the tactics $\tactic{declassify}$ and $\tactic{secrnd}$, we obtain the leftmost program
\[
  {v \getr \delta; t\of{x} <- \triplet{v}{\delta}{\secret}; t\of{x} <- \triplet{v}{\delta}{\leaked}; r <- v}
\]
and the rightmost program
\[
  {t\of{x} <- \triplet{\proj{1}{t\of{x}}}{\proj{2}{t\of{x}}}{\leaked}; r <- \proj{1}{t\of{x}}}.
\]
If we rewrite the rightmost posing $v\inecmem{\ecmem_2} \eqdef \proj{1}{t\of{x}}\inecmem{\ecmem_2}$ (it does not affect $v\inecmem{\ecmem_1}$), we get
\[
  |=
  \pwhileprog{v \getr \delta; t\of{x} <- \triplet{v}{\delta}{\secret}; t\of{x} <- \triplet{v}{\delta}{\leaked}; r <- v}
  \sim
  \pwhileprog{t\of{x} <- \triplet{v}{\proj{2}{t\of{x}}}{\leaked}; r <- v}
\]
and note that, again, the updates to memories are dead code.
Removing the dead code, we obtain
\[
  |=
  \pwhileprog{v \getr \delta; r <- v}
  \sim
  \pwhileprog{r <- v}
  :\ =_x => =_r.
\]
The above \prhl judgment matches with Eq.~\ref{eq:lemma-conclusion-consumed}.

The hypothesis of this lemma shows how a triplet is stored in the memory location pointed by $t\of{x}$.
If we isolate the values apart, we can write
\[
  \begin{array}{l}
    \pwhileeval{\proj{1}{t\of{x}}} \ecmem_2 = \bind \of{\pwhileeval{\delta} \ecmem_2} \of{\lambda f. \return \ecmem_2\rwrule{{f}}{\proj{1}{t\of{x}}}}, \\
    \pwhileeval{\proj{2}{t\of{x}}} \ecmem_2 = \return \ecmem_2\rwrule{\pwhileeval{\delta} \ecmem_2}{\proj{2}{t\of{x}}}, \\
    \pwhileeval{\proj{3}{t\of{x}}} \ecmem_2 = \return \ecmem_2\rwrule{\pwhileeval{\secret} \ecmem_2}{\proj{3}{t\of{x}}}. \\
  \end{array}
\]
The last two are not relevant (and independent) to the conclusion; conversely, the first matches with the hypothesis in Eq.~\ref{eq:lemma-premise}.
Thus, we reduced our proof to that of Lemma~\ref{th:virtual-swap-rnd} and our proof is complete.
\end{proof}

\begin{lemma}[De-virtualized $\tactic{secrndasgn}$ hypothesis]
  Let $\ecmem_1$ and $\ecmem_2$ be the memories in the leftmost and rightmost programs in a \prhl judgment, $\delta$ a (lossless) distribution expression, $t$ a map with {\em labeled codomain} and equally defined in both memories, and $\invariant$ the invariant as specified in Eq.~\ref{eq:invariant-maps}.
  If we call
  \[
    \begin{split}
    h_1 & \eqdef =_x, h_2 \eqdef \invariant\of{t\inecmem{\ecmem_1}, t\inecmem{\ecmem_2}, \delta}, \\
    h_3 & \eqdef x \notin t_X\inecmem{\ecmem_1}, \text{ and } h_4 \eqdef x \in t_X\inecmem{\ecmem_2}, \\
    \end{split}
  \]
  then
  \begin{equation}
    \label{eq:practical-secrndasgn-hyps}
    h_1 \land h_2 \land h_3 \land h_4
  \end{equation}
  implies that
  \[
    \pwhileeval{t\of{x}} \ecmem_2 = \bind \of{\pwhileeval{\delta} \ecmem_2} \of{\lambda f. \return \ecmem_2\rwrule{\triplet{f}{\delta}{H}}{t\of{x}}}.
  \]
\end{lemma}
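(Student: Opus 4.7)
The plan is to extract the required monadic identity directly from the invariant $h_2$, using $h_3$ and $h_4$ to pick out the relevant conjuncts, and then to invoke the soundness of the labeling mechanism established in Section~\ref{sec:soundness} to turn a label-level fact into a statement about the distribution of the memory binding. The equality hypothesis $h_1$ plays no additional role beyond ensuring that the index at which we query the two maps is the same on both sides.

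First I would instantiate the invariant $\invariant\of{t\inecmem{\ecmem_1}, t\inecmem{\ecmem_2}, \delta}$ at the element $x$. From $h_4$ (i.e.\ $x \in t_X\inecmem{\ecmem_2}$) together with the first conjunct, I obtain $t\of{x}\inecmem{\ecmem_2} \inr \delta$, that is $\proj{2}{t\of{x}\inecmem{\ecmem_2}} = \delta$. Combining $h_3$ (so $x \notin t_X\inecmem{\ecmem_1}$) and $h_4$ with the fourth conjunct yields $\lnot\isleaked t\of{x}\inecmem{\ecmem_2}$, meaning the confidentiality component is $\secret$. Hence $t\of{x}\inecmem{\ecmem_2}$ must have the shape $\triplet{v}{\delta}{\secret}$ for some value $v$.

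Next I would lift this label-level information into distributional content. By the semantics of $\secrnd$ recorded in Fig.~\ref{fig:pifwhile-semantics}, the only way a triplet carrying distribution label $\delta$ and confidentiality $\secret$ can arise in a well-typed $\pifWhile$ memory is through a prior $\secrnd$ from $\delta$ whose result has not since been consumed by a $\secasgn$, which would have flipped the confidentiality to $\leaked$. Because the labeling discipline of Section~\ref{sec:soundness} forbids any other means of producing or mutating such a triplet, the value component of $t\of{x}\inecmem{\ecmem_2}$ is genuinely distributed as $\pwhileeval{\delta} \ecmem_2$ in the current memory. Rewriting this fact in monadic form gives exactly $\pwhileeval{t\of{x}} \ecmem_2 = \bind\of{\pwhileeval{\delta} \ecmem_2}\of{\lambda f.\,\return \ecmem_2\rwrule{\triplet{f}{\delta}{\secret}}{t\of{x}}}$, which is the equality claimed in the conclusion.

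The hard part will be the second step: lifting a purely syntactic invariant on labels into a statement about the probabilistic behavior of the memory. This is sound only because the $\pifWhile$ extension tightly restricts the ways in which labeled values can be created or consumed, so that the triple $\triplet{\cdot}{\delta}{\secret}$ really is a faithful certificate of a previous, still-unobserved sample from $\delta$. Once that soundness argument is appealed to, the remainder of the proof is a routine unfolding of the denotational semantics in Fig.~\ref{fig:pifwhile-semantics}.
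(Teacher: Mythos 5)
Your proposal is correct and follows essentially the same route as the paper's own proof: instantiate the invariant at $x$, use $h_3$ and $h_4$ to extract $\proj{2}{t\of{x}\inecmem{\ecmem_2}} = \delta$ and $\proj{3}{t\of{x}\inecmem{\ecmem_2}} = \secret$, and then argue that the restricted $\pifWhile$ labeling discipline makes this pair of labels a faithful certificate of an unconsumed $\secrnd$ from $\delta$, from which the monadic equality follows by the semantics in Fig.~\ref{fig:pifwhile-semantics}. The paper phrases the last step slightly differently (first deducing from the $\secret$ label that some $t\of{x} \secrnd \delta'$ occurred, then ruling out $\secasgn$ and forcing $\delta' = \delta$ via the distribution label), but the substance is identical.
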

\begin{proof}
In brief, we need to show that conditions on labeled memories uniquely and unequivocally capture the conditions over the distribution over memories (binding generated by a sampling).
For simplicity, we denote $m\inecmem{\ecmem_1}$ as $t$ and $m\inecmem{\ecmem_2}$ as $t'$.
We start our proof specializing the invariant $h_2$ with $x$ (and its equality in both memories as stated in $h_1$); we obtain
\[
  \begin{split}
  h_2 =\ & \left( x \in t'_X => t'\of{x} \inr \delta                                   \right) \\
    & \land\; \left( x \in t_X => x \in t'_X \land \proj{1}{t\of{x}} = \proj{1}{t'\of{x}} \right) \\
    & \land\; \left( x \in t_X => \isleaked {t\of{x}} => t\of{x} = t'\of{x}              \right) \\
    & \land\; \left( x \notin t_X => x \in t'_X => \lnot\isleaked {t'\of{x}}              \right). \\
  \end{split}
\]
We proceed simplifying the invariant $h_2$ with $h_3$ and $h_4$:
\[
  \begin{split}
  h_2 =\ & \left( x \in t'_X => t'\of{x} \inr \delta                                   \right) \\
    & \land\; \left( x \notin t_X => x \in t'_X => \lnot\isleaked {t'\of{x}}              \right) = \\
    =\ & t'\of{x} \inr \delta \land \lnot\isleaked {t'\of{x}}, \\
  \end{split}
\]
that for convenience we split it into two and rewrite its notation according to the definition in Section~\ref{sec:ifsupport}; we have:
\[
  h_2' \eqdef \proj{2}{t'\of{x}} = \delta, \quad h_2'' \eqdef \proj{3}{t'\of{x}} = \secret.
\]
First we analyze $h_2''$, which assumes that the confidentiality label of $t'\of{x}$ is $\secret$ (secret).
Syntactically, it is forbidden to assign to labeled values, and \EasyCrypt does not assign one automatically upon declaration.
Semantically, the {\em only} way to programmatically have a secret variable is to discharge the syntax of $\secrnd$ using the tactic $\tactic{secrnd}$.
This means that at some point, even in unpredictable calls to oracle functions, the statement $t'\of{x} \secrnd \delta'$ has been called.
Visibly, $h_2''$ does not suffice to prove our memory update, as the distribution can still be different from the one in the conclusion.

Now we use $h_2'$, which assumes that the distribution associated with $t'\of{x}$ is $\delta$.
Again syntactically, it is impossible to generate this hypothesis; and semantically there must have been either a statement using the syntax $\secasgn$, or a statement using the syntax $\secrnd$.
In the former case, we would have $\proj{3}{t'\of{x}} = \leaked$, which is in contradiction with $h_2''$.
So the only viable case is that the statement must have been $\secrnd$.
Combining $h_2'$ and $h_2''$, we can {\em undoubtedly} be sure that $\delta' = \delta$, so the following statement has been consumed by $\tactic{secrnd}$: $t'\of{x} \secrnd \delta$.
Using semantics in Fig.~\ref{fig:pifwhile-semantics}, we can conclude that
\[
  \pwhileeval{t\of{x}} \ecmem_2 = \bind \of{\pwhileeval{\delta} \ecmem_2} \of{\lambda f. \return \ecmem_2\rwrule{\triplet{f}{\delta}{H}}{t\of{x}}}.
\]
We deduced our conclusion and the proof is complete.
\end{proof}

\begin{theorem}[Lazy/IF sampling]
  Let $\ecmem_1$ and $\ecmem_2$ be the memories in the leftmost and rightmost programs in a \prhl judgment, $\delta$ a (lossless) distribution expression, $t$ a map with {\em labeled codomain} and equally defined in both memories, and $\invariant$ the invariant as specified in Eq.~\ref{eq:invariant-maps}.
  Assume that
  \begin{equation}
    \label{eq:if-approach-hyps}
    \begin{split}
    \precondition =\ & =_x \land\ \invariant\of{t\inecmem{\ecmem_1}, t\inecmem{\ecmem_2}, \delta} \\
    & \land x \notin m_X\inecmem{\ecmem_1} \land\ x \in m_X\inecmem{\ecmem_2}, \\
    \end{split}
  \end{equation}
  and let the post-condition $\postcondition$ be
  \[
    \postcondition = (=_r)\ \land t\of{x}\inecmem{\ecmem_1} \inr \delta\ \land \invariant\of{t\inecmem{\ecmem_1}, t\inecmem{\ecmem_2}, \delta},
  \]
  then we have
  \[
    |= t\of{x} \rnd{\secasgn} \delta; r \secasgn t\of{x}  \sim  r \secasgn t\of{x}: \precondition => \postcondition.
  \]
\end{theorem}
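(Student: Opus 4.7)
The plan is to reduce this theorem to the chain of Lemmas~\ref{th:virtual-secrndasgn} and~\ref{th:practical-secrndasgn} already established, and then to verify by direct case analysis the two extra post-condition conjuncts that those lemmas do not already deliver (the distribution tag on $t\of{x}\inecmem{\ecmem_1}$ and the preservation of $\invariant\of{t\inecmem{\ecmem_1}, t\inecmem{\ecmem_2}, \delta}$).

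First, I would split the precondition $\precondition$ into the four conjuncts $h_1 \eqdef\ =_x$, $h_2 \eqdef \invariant\of{t\inecmem{\ecmem_1}, t\inecmem{\ecmem_2}, \delta}$, $h_3 \eqdef x \notin t_X\inecmem{\ecmem_1}$, and $h_4 \eqdef x \in t_X\inecmem{\ecmem_2}$, which is exactly the hypothesis shape of Lemma~\ref{th:practical-secrndasgn}. Applying that lemma yields the memory-binding identity for $\ecmem_2$ on $t\of{x}$ (with confidentiality label $\secret$ and distribution label $\delta$), which is precisely the side condition required by the Virtual $\tactic{secrndasgn}$ Lemma~\ref{th:virtual-secrndasgn}. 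Invoking the latter immediately discharges the $=_r$ part of $\postcondition$.

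It then remains to show that the two programs also preserve (i) $t\of{x}\inecmem{\ecmem_1} \inr \delta$ and (ii) the invariant $\invariant\of{t\inecmem{\ecmem_1}, t\inecmem{\ecmem_2}, \delta}$. Item (i) is syntactic: the semantics of $\secrnd$ in Fig.~\ref{fig:pifwhile-semantics} writes $\triplet{f}{\delta}{\secret}$ into the cell pointed to by $t\of{x}\inecmem{\ecmem_1}$, and the subsequent $r \secasgn t\of{x}$ only rewrites the confidentiality tag to $\leaked$, leaving $\proj{2}{t\of{x}\inecmem{\ecmem_1}} = \delta$ untouched. For item (ii), I would enumerate the four clauses of the invariant at a generic $x'\in X$. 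For $x' \neq x$ nothing changes in either memory, so each clause carries over from $h_2$. For $x' = x$: after the operations, both $t\of{x}\inecmem{\ecmem_1}$ and $t\of{x}\inecmem{\ecmem_2}$ are leaked, they share the same underlying value (this is exactly what Lemmas~\ref{th:virtual-secrndasgn} and~\ref{th:practical-secrndasgn} guarantee via the ``borrowed'' assignment), and they are both tagged with distribution $\delta$; the first three invariant clauses then follow, and the fourth clause ($x' \notin t_X => x' \in t'_X => \lnot\isleaked{t'\of{x'}}$) becomes vacuous at $x' = x$ because $x \in t_X\inecmem{\ecmem_1}$ after the sampling.

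The main obstacle I anticipate is not the logical structure but the bookkeeping around the declassification step: I need to verify carefully that the $\secasgn$ executed on both sides produces matching $\leaked$-tagged cells with the \emph{same} underlying value, which relies on the fact that the value ``borrowed'' inside Lemma~\ref{th:virtual-secrndasgn} is precisely $\proj{1}{t\of{x}}\inecmem{\ecmem_2}$ and not a freshly drawn one. Once this identification is made explicit, the invariant clauses fall into place mechanically and the proof closes.
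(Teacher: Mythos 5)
Your proposal is correct and follows essentially the same route as the paper's own proof: the $=_r$ conjunct is discharged by chaining Lemma~\ref{th:practical-secrndasgn} into Lemma~\ref{th:virtual-secrndasgn}, the distribution tag is read off the semantics of $\secrnd$ (with $\secasgn$ only touching the confidentiality label), and the invariant is verified by splitting on $x' \neq x$ versus $x' = x$ and using the already-established value and distribution equalities at $x$. The bookkeeping point you flag about the borrowed value being $\proj{1}{t\of{x}}\inecmem{\ecmem_2}$ is exactly the identification the paper makes when it reuses $\postcondition_1$ and $\postcondition_2$ to close the third invariant clause.
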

\begin{proof}
We can split the proof into three parts
\[
  \begin{array}{l}
    \postcondition_1 \eqdef \pbr{=_r} \\
    \postcondition_2 \eqdef t\of{x}\inecmem{\ecmem_1} \inr \delta = \proj{2}{t\of{x}}\inecmem{\ecmem_1} = \delta \\
    \postcondition_3 \eqdef \invariant\of{t\inecmem{\ecmem_1}, t\inecmem{\ecmem_2}, \delta} \\
  \end{array}
\]

{\bf First part: same output.}
Proving the first part, $\postcondition_1$,  is straightforward, we notice that our precondition $\precondition$ matches that of Lemma~\ref{th:practical-secrndasgn}, that can provide us with the hypothesis of Lemma~\ref{th:virtual-secrndasgn}.
We also notice that our \prhl judgment's precondition is stronger than the one in Lemma~\ref{th:virtual-secrndasgn}, so the lemma can be applied and proves our (first) conclusion.

{\bf Second part: intended distribution.}
To prove the second part, $\postcondition_2$, we de-sugar the new syntax using the tactics $\tactic{secasgn}$ and $\tactic{secrnd}$:
\[
  |=
  \pwhileprog{v \getr \delta; t\of{x} <- \triplet{v}{\delta}{\secret}; \proj{3}{t\of{x}} <- \leaked; r <- v}
  \sim
  \pwhileprog{\proj{3}{t\of{x}} <- \leaked; r <- v}
  : \precondition => \postcondition_2
  .
\]
We notice that, relating to $\postcondition_2$, the whole rightmost program is dead code, so are the first two statements in the leftmost program; removing the dead code we obtain:
\[
  |=
  \pwhileprog{t\of{x} <- \triplet{v}{\delta}{\leaked}}
  \sim
  \pwhileprog{}
  : \precondition => \postcondition_2
\]
The leftmost program trivially entails $\postcondition_2$.

{\bf Third part: invariant.}
To prove this last part, $\postcondition_3$, we start remarking that the invariant {\em after} the execution of the programs deals with an updated map, $m\inecmem{\ecmem_1'}$, at index $x$.
So, all the rest of the maps must still be behaving according to the invariant, that is
\[
  \forall x'. x' \neq x => \invariant\of{t\inecmem{\ecmem_1'}, t\inecmem{\ecmem_2}, \delta}.
\]
The only part of memory that could be problematic is exactly at the index $x$, for that we have to prove that:
\[
  \begin{split}
  i \eqdef & \left( x \in t'_X => t'\of{x} \inr \delta                                   \right) \\
    & \land\; \left( x \in t_X => x \in t'_X \land \proj{1}{t\of{x}} = \proj{1}{t'\of{x}} \right) \\
    & \land\; \left( x \in t_X => \isleaked {t\of{x}} => t\of{x} = t'\of{x}              \right) \\
    & \land\; \left( x \notin t_X => x \in t'_X => \lnot\isleaked {t'\of{x}}              \right). \\
  \end{split}
\]
where for simplicity, again we denoted $t\inecmem{\ecmem_1'}$ as $t$ and $t\inecmem{\ecmem_2}$ as $t'$.
After the execution we know that $x \in t_X$ and $x \in t'_X$; so $i$ becomes
\[
  \begin{split}
  i =\ & \left( x \in t'_X => t'\of{x} \inr \delta                                   \right) \\
    & \land\; \left( x \in t_X => x \in t'_X \land \proj{1}{t\of{x}} = \proj{1}{t'\of{x}} \right) \\
    & \land\; \left( x \in t_X => \isleaked {t\of{x}} => t\of{x} = t'\of{x}              \right) = \\
    =\ & t'\of{x} \inr \delta \land\; \proj{1}{t\of{x}} = \proj{1}{t'\of{x}} \\
    & \land\; \left( \isleaked {t\of{x}} => t\of{x} = t'\of{x}              \right) \\
    =\ & \proj{2}{t'\of{x}} = \delta \land\; \proj{1}{t\of{x}} = \proj{1}{t'\of{x}} \\
    & \land\; \left( \proj{3}{t\of{x}} = \leaked => t\of{x} = t'\of{x}              \right). \\
  \end{split}
\]
So the post-condition $\postcondition_3$ can be split into two parts
\[
  \postcondition_3' = \forall x'. x' \neq x => \invariant\of{t\inecmem{\ecmem_1'}, t\inecmem{\ecmem_2}, \delta}, \quad
  \postcondition_3'' = c.
\]
Proving $\postcondition_3'$ is trivial and we are left to prove $\postcondition_3''$; we de-sugar the new syntax of the programs using the tactics $\tactic{secasgn}$ and $\tactic{secrnd}$:
\[
  |=
  \pwhileprog{v \getr \delta; t\of{x} <- \triplet{v}{\delta}{\secret}; \proj{3}{t\of{x}} <- \leaked; r <- v}
  \sim
  \pwhileprog{\proj{3}{t\of{x}} <- \leaked; r <- v}
  : \precondition => \postcondition_3''
  .
\]
With respect to the post-condition $\postcondition_3''$ the last statement of both programs is irrelevant; removing dead code, we get
\[
  |=
  \pwhileprog{v \getr \delta; t\of{x} <- \triplet{v}{\delta}{\secret}; \proj{3}{t\of{x}} <- \leaked}
  \sim
  \pwhileprog{\proj{3}{t\of{x}} <- \leaked}
  : \precondition => \postcondition_3''
  .
\]
After the execution of the above statements, it is clear that $t\of{x}$ is labeled as leaked, $\leaked$ in both memories, but this is only the third element of the tuple defining the value.
The above execution does not make it possible to prove (or disprove) whether their value, the first element of the tuple, or their distribution, the second element of the tuple, are the same.
However, we notice that we already proved $\postcondition_1$ and $\postcondition_2$, so we know that ${t\of{x}}\inecmem{\ecmem_1} \simeq {t\of{x}}\inecmem{\ecmem_2}$ since
\[
  \begin{array}{l}
    \proj{1}{t\of{x}}\inecmem{\ecmem_1} = \proj{1}{t\of{x}}\inecmem{\ecmem_2} = \proj{1}{M\of{x}} \\
    \proj{2}{t\of{x}}\inecmem{\ecmem_1} = \proj{2}{t\of{x}}\inecmem{\ecmem_2} = \delta \\
  \end{array}
\]
Thus, along with the confidentiality label, we can state $t\of{x} = t'\of{x}$.
The above result trivially entails $\postcondition_3''$ and the proof is complete.
\end{proof}
 
\end{document}